\definecolor{darkgreen}{rgb}{0,0.5,0}
\definecolor{darkblue}{rgb}{0,0,0.8}
\definecolor{darkred}{rgb}{0.8,0,0}
\title{\textbf{Comparison Dynamics in Population Protocols}}
\author{ Dan Alistarh}
\affil[1]{IST Austria}
 \author[1]{Martin Töpfer}
 \author[2]{Przemysław Uznański}
 \affil[2]{Institute of Computer Science, University of Wrocław, Poland}
\date{}
\DeclareMathOperator*{\E}{\mathbb{E}}
\newcommand{\Var}{\mathrm{Var}}
\newtheorem{theorem}{Theorem}
\newtheorem{lemma}{Lemma}
\newtheorem{definition}{Definition}
\newtheorem{observation}{Observation}
\newtheorem{corollary}{Corollary}
\renewcommand{\emph}[1]{\textit{#1}}
\newcommand{\alg}{\texttt{PopComp}}
\renewcommand{\paragraph}[1]{\vspace{.2em}\noindent\textbf{#1}}
\newcommand{\bigo}{\mathcal{O}}
\begin{document}
\maketitle

\begin{abstract}
    There has recently been a surge of interest in the computational and complexity properties of the \emph{population model}, which assumes $n$  anonymous, computationally-bounded nodes, interacting at random, with the goal of jointly computing global predicates. 
Significant work has gone towards investigating \emph{majority} or \emph{consensus} dynamics in this model: that is, assuming that every node is initially in one of two states $X$ or $Y$, determine which state had higher initial count. 
    In this paper, we consider a natural generalization of majority/consensus, which we call \emph{comparison}: in its simplest formulation, we are given two \emph{baseline} states, $X_0$ and $Y_0$, present in any initial configuration in fixed, but possibly small counts. One of these states has higher count than the other: we will assume $|X_0| \ge C |Y_0|$ for some constant $C > 1$.
    The challenge is to design a protocol by which nodes can quickly and reliably decide on which of the baseline states $X_0$ and $Y_0$ has higher initial count. 

    We begin by analyzing a simple and general dynamics solving the above comparison problem, which uses $\bigo( \log n  )$ states per node, and converges in $\bigo( \log n )$ (parallel) time, with high probability, to a state where the whole population votes on opinions $X$ or $Y$ at rates proportional to the initial  concentrations of $|X_0|$ vs. $|Y_0|$. 
    We then describe how this procedure can be bootstrapped to solve comparison, i.e. have every node in the population reach the ``correct'' decision, with probability $1 - o(1)$, at the cost of $\bigo(\log \log n)$ additional states.
    Further, we prove that this dynamics is \emph{self-stabilizing}, in the sense that it converges to the correct decision from \emph{arbitrary initial states}, and \emph{leak-robust}, in the sense that it can withstand spurious faulty reactions, which are known to occur in practical implementations of population protocols. 
    Our analysis is based on a new martingale concentration result relating the discrete-time evolution of a population protocol to its expected (steady-state) analysis, which should be a useful tool when analyzing opinion dynamics and epidemic dissemination in the population model.

\end{abstract}

\thispagestyle{empty}
\section{Introduction}
Population protocols are a model of distributed computation in which a set of $n$ simple agents, modeled as identical state machines, cooperate to jointly compute predicates over the system's initial state. Agents have no control over their interaction pattern: they interact in pairs, chosen by an external scheduler. A common assumption, which we also adopt, is that the interaction schedule is uniform random across all possible node pairs.
Since its introduction~\cite{AADFP06}, the population model has become a  popular way of modeling distributed computation in various settings, from animal populations, to wireless networks, and chemical reaction networks. 
Significant attention has been given to the computational power of population protocols~\cite{AAER07, CMNPS11}, as well as determining the complexity thresholds for fundamental problems, such as leader election and majority. For recent surveys of the area, please see~\cite{elsaesser2018recent, AG18}, as well as the more recent~\cite{becchetti2020consensus}, which focuses specifically on consensus dynamics in the population model 

One classic example of the algorithmic power of population protocols is the elegant \emph{three-state approximate majority algorithm}. Discovered independently by~\cite{AAE08, PVV09}, this simple dynamics has been implemented in synthetic DNA~\cite{CD13}, and has been linked to the fundamental cell cycle biological process~\cite{CCN12}. 
The problem it addresses is \emph{majority}, where it is assumed that all agents are initially in one of the states $A$ or $B$, and agents must converge on a consensus decision as to which one of the two had higher initial count. 
This is done via the following interactions:
    $$A + B \rightarrow C + C, \,\,
    A + C \rightarrow A + A, \textnormal{   and   } B + C \rightarrow B + B.$$ 
Intuitively, if both ``strong'' opinions ($A$ or $B$) interact, then they both move to the ``undecided'' state $C$, while either of the ``strong'' opinions $A$ or $B$ turns an undecided $C$ agent to its side. 
Angluin et al.~\cite{AAE08} showed that this simple algorithm  has surprisingly strong properties: it converges to the correct majority decision with high probability (w.h.p.), as long as the initial difference between the initial states is $\Omega( \sqrt{n \log n} )$, in time that is poly-logarithmic in $n$, and that it can even withstand limited \emph{Byzantine} failures. ({In the following, we adopt the standard definition of high probability to be at least $\geq 1 - 1 / n^c$, where $c \geq 1$ is a constant.})

A related problem arising in practical scenarios is \emph{robust detection}~\cite{alistarh2017robust}, in which nodes aim to determine if a distinct \emph{detectable} state $D$ is present in or absent from the population. 
Moreover, the problem requires that the algorithm be \emph{robust to leaks}~\cite{Leaks}, roughly defined as low-probability faulty reactions in which any state implemented by the algorithm may appear spuriously. ({Leaks are meant to model the impact of the laws of chemistry on the algorithm execution, which might for instance reverse reactions with some small probability. 
We detail the definition of leaks and their impact on the execution in the model section.}) 
\cite{alistarh2017robust} shows a robust detection protocol which satisfies both these requirements. \cite{DBLP:conf/stoc/DudekK18} considers the same detection problem, showing that any self-stabilizing protocol for detection requires $\Omega(\log \log n)$ states per node if the algorithm is to converge in poly-logarithmic time, and $\omega(1)$ states if the goal is sublinear time. Second, they show that detection can in fact be solved in  $\bigo(1)$ states, by a protocol which \emph{does not stabilize}, as some states may keep oscillating between very small and large counts.


\paragraph{The Robust Comparison Problem.} 
In this paper, we consider a generalization of both  \emph{majority} and \emph{robust detection}, which we call  \emph{robust comparison}. For simplicity, we first describe a simple \emph{static} version of the task: we are given two \emph{baseline} states, $X_0$ and $Y_0$, present in any initial configuration in fixed, possibly small, e.g. logarithmic, counts.  Importantly, one of these states has higher count than the other: we assume that $|X_0| \ge C |Y_0|$ for some constant $C > 1$. 
The goal is to design a protocol which allows nodes to quickly decide on which of these baseline states has higher count. More precisely, we will require nodes to output the correct answer with a prescribed probability $>1/2$. 
\newline
In the \emph{dynamic} variant of the problem, we ask that the algorithm should be \emph{self-stabilizing}, in the sense that it should converge to the correct decision \emph{from arbitrary initial states}. In particular, in the dynamic version, we allow the counts of the baseline states $X_0$ and $Y_0$ to be changed adversarially during the execution, as long as these  counts remain stable for a sufficiently long period allowing the algorithm to stabilize on an output. We will show that polylogarithmic parallel time is in fact sufficient. In addition, we will also show a variant of the protocol that is \emph{leak-robust}. 

\paragraph{Problem Motivation.} The comparison problem has not been considered at this level of generality before. Detection~\cite{alistarh2017robust, DBLP:conf/stoc/DudekK18} is obviously a  special case of comparison, where one of the baseline states has \emph{zero} count. 
Moreover, the classic approximate majority problem, e.g.~\cite{AAE08}, can be seen as a \emph{static, one-shot} special instance of comparison, in which both baseline states have initial count $\Theta(n)$, and we wish to determine which of the two had higher initial count. Of note,  comparison in the general case when the baseline counts may be $o(n)$ is \emph{strictly harder} than \emph{approximate majority}: constant-state, polylog-stabilization time algorithms exist for approximate majority~\cite{AAE08}, but, following the lower bound of~\cite{DBLP:conf/stoc/DudekK18}, no such algorithm may exist for comparison. 

%

\paragraph{Contribution.} We describe and analyze simple and general dynamics for solving robust comparison in population protocols, and provide concentration bounds on its convergence using a new analysis technique. 

The basic algorithm, which we call \alg{}, uses $\bigo( \log n )$ states per node, stabilizes to the correct answer in parallel time $\bigo( \log n )$ from \emph{any} initial configuration, and is robust to leaks. 
For simplicity, we describe the algorithm in the \emph{static} case below, and will generalize the presentation in the later sections. 
Assume some given set of agents in baseline states $X_0$ and $Y_0$, whose relative counts the algorithm needs to compare. For now, we will assume that these states are \emph{immutable}, i.e. the algorithm's interaction rules do not change their counts,  since their relationship is what we need to determine. 
Without loss of generality, assume $|X_0| > |Y_0|$.
The algorithm will implement sequences of ``detector'' states $X_1, X_2, \ldots, X_s$ and $Y_1, Y_2, \ldots, Y_s$, where $s = \log n + \Theta(\log \log n)$ is a parameter, as well as a neutral state $N$. 

The intuitive role of the indexed \emph{strong} $X_i$ and $Y_i$ states is to measure how long the interaction chain is between the current agent and an $X_0$ or $Y_0$ node at any given point.
For example, any node which interacts directly with $X_0$ will move to state $X_1$, and symmetrically, any node which interacts directly with $Y_0$ will move to state $Y_1$. The key interaction is between a node in state $X_j$ or $Y_j$, which interacts with a node $X_i$ of \emph{lower} index $i < j$. 
In this case, the former agent will be part of a \emph{shorter} interaction chain with respect to $X_0$, moving to state $X_{i + 1}$, while the latter agent increases the length of its chain by one, moving to $X_{i + 1}$ as well. 
Generalizing, we obtain a series of reactions of the type:
\begin{eqnarray}
\label{eqn:increase}
\forall_{s > j > i > 1}\quad  &\ \ X_i + X_j  \rightarrow X_{i + 1} + X_{i + 1}, \\
&\ Y_i + Y_j  \rightarrow Y_{i + 1} + Y_{i + 1},  \notag\\
&\ \ X_i + Y_j  \rightarrow X_{i + 1} + X_{i + 1},  \notag\\
&Y_i + X_j  \rightarrow Y_{i + 1} + Y_{i + 1}. \notag
\end{eqnarray}

Notice that $\bigo(\log n)$ is a natural upper bound for the length of an interaction chain, since every agent is  $\bigo(\log n)$ ``hops'' away from  $X_0$ or $Y_0$, with high probability. The key observation is that we can reliably use the \emph{relative sizes} of these interaction chains to distinguish between the baseline states: agents are more likely to be ``closer'' to the more populous state, rather than to the competitor. 
We leverage this as follows. Let us cap the maximum level at $s = \log n + \Theta(\log \log n)$. Nodes continue to increase their level or reset it to a previous one, according to Equation~\ref{eqn:increase}, as long as the level's value is  $\leq s$. As soon as the length of the chain would increase past $s$, agents move to the \emph{neutral} state $N$, at which point they stop influencing other agents in terms of their choice.  A neutral agent can become non-neutral only if it interacts with another $X_i$ or $Y_i$ agent with $i < s$, and it resets the length of its chain to $\leq s$.
%
%

\paragraph{Analysis.} As is often the case in population protocols, this algorithm is intuitive; however, its recursive structure requires careful analysis. 
A natural first approach would be a ``steady-state'' analysis, in which one writes out the expected counts of agents of every type and the relationships between them, assuming stable counts. One then solves this system of constraints in order to determine the expected counts at ``equilibrium.'' 
However, at best, this approach yields \emph{expected} bounds on the state counts, and cannot characterize the concentration of state counts at some given point in the execution. 
In particular, in the case of our algorithm, since consecutive level counts are highly correlated, characterizing their concentration is challenging---if not impossible---using this approach. Concretely, notice that even minor fluctuations of the interaction counts for the lower-level states---e.g., the less populous baseline states $Y_0$ happen to interact more frequently than $X_0$ states for a brief interval---can lead to ``inversion cascades'' at later stages in the chain. 
More generally, linking steady-state behavior with exact algorithm dynamics is known to be difficult in population protocols, and even for some basic algorithms only steady-state behaviour is known~\cite{DV12, alistarh2017robust}. 
 
We introduce a new approach to circumvent this limitation, based on two technical ideas. 
The first is that, even though the state counts at various levels are correlated, their evolution roughly has super-martingale-type behavior, oscillating around its average value, with variability (``noise'') due to the natural variance of state counts at previous levels. 
(See Section~\ref{sec:warmup} for a detailed walk-through.) 
A tempting approach then is to apply a Bernstein-type martingale concentration inequality~\cite{Bernstein} to the level counts in order to characterize their concentration  around their expectation. 
However, known versions of this concentration result do not apply to our setting, in particular due to the presence of noise. 

We overcome this problem by proving a new  concentration bound, which could be useful more generally. 
This result allows us to bound the influence of variability at previous levels onto the counts at a certain level $\ell$, and to prove concentration for each of the level counts. 
Iterating, we obtain that, if the base level counts $X_0$ and $Y_0$ are separated by a large enough multiplicative constant $C_1$, 
then the counts at the \emph{last level} will also be separated by a multiplicative constant $C_2$, w.h.p.
This result allows us to show \emph{fast} convergence: level counts will recover to concentrate close to their expected mean in poly-logarithmic parallel time. 
In turn, this result opens up several extensions.  

\paragraph{Extensions.}
The first extension of the above dynamics boosts the probability that an agent identifies the correct output state from the \emph{constant} one postulated above, to $1 - o(1)$. 
Thus, all but a sub-constant fraction of the agents reach the correct answer.  
Boosting is achieved via a general approximate counting mechanism, which has each agent use $\bigo( \log \log n )$ additional state to sample the population and determine the majority state with higher confidence.

As a second  extension, we exhibit a non-trivial space-time trade-off for variants of this protocol. 
For instance, we exhibit two protocol variants which employ $o(\log n)$ and $\bigo( \log \log n ) $ states, and ensure convergence in parallel time $\bigo( \log^{\Theta(1) }  n )$ and $n^{o(1)}$, respectively. These protocols show that it is possible to perform comparison in sub-linear time using less than logarithmic states per agent, coming close to the lower bound of~\cite{DBLP:conf/stoc/DudekK18}. 
The analyses of these variants leverage different instantiations of our concentration theorem. 

Third, we show that our algorithm is \emph{leak-robust}, i.e. can withstand spurious reactions. Again, this property follows by applying the concentration theorem with modified parameters to account for faulty reactions. 


\paragraph{Related Work.}
Our work is part of a wider research effort studying majority, consensus and leader election dynamics in population protocols. 
For algorithms with \emph{exact/deterministic} correctness guarantees, tight or almost-tight space-time trade-offs are now known, thanks to continuous progress over  the past decade, e.g.~\cite{DV12, MNRS17, alistarh2017time, AAG18, BRKKR18, BEFKKR18,podcmajority, berenbrink2020optimal}. In brief, the \emph{logarithmic} space and time complexity thresholds appear to be tight for exact majority, e.g.~\cite{AAG18, BRKKR18, BEFKKR18, podcmajority,DBLP:journals/corr/abs-2106-10201}.   
Constant-state solutions with fast convergence (but no stabilization) are known for both approximate and exact majority~\cite{DBLP:journals/corr/abs-1802-06872}.

A close examination shows that the existing results on the \emph{exact majority} problem, enumerated above, are  not comparable with our results. 
Specifically, these algorithms ensure stronger guarantees, but also require a stronger set of assumptions. 
Superficially, exact majority algorithms may appear stronger since they \emph{deterministically} address the case of a \emph{constant} initial gap, e.g. $|X_0| - |Y_0| = \bigo(1)$. However, these algorithms assume $|X_0| + |Y_0| = n$, whereas we usually focus on lower counts. Further, these algorithms are \emph{not self-stabilizing}. (Specifically, most exact algorithms rely on state invariants which will be broken if we start in arbitrary initial state.) 
To illustrate this point, notice that there is a complexity gap between exact majority algorithms, which are known to require $\Omega( \log n )$ states to stabilize in sublinear time~\cite{AAG18}, and our algorithm, which employs $\bigo(\log \log n)$ states, and requires $n^{o(1)}$ time.  
We believe that it may be possible for some exact majority techniques, e.g.~\cite{AAG18}, to be extended to the setting where $|X_0| + |Y_0| = o(n)$, although this does not appear straightforward, and the resulting algorithms would not be self-stabilizing.  

By contrast to the exact case, the complexity of \emph{approximate} majority---which may converge to the wrong answer with some probability---and that of \emph{dynamic} ones---where the input may change during the execution---are not well-studied. 
For approximate majority, this may be because the classic three-state approximate majority protocol~\cite{AAE08} already unifies several desirable properties: fast convergence, robustness to Byzantine faults, and an optimal state space size. 

Subsequent work by d'Amore, Clementi, and Natale~\cite{d2020phase} considered a variant of approximate consensus known as opinion dynamics in a model of probabilistically noisy interactions, and established that a phase transition occurs for specific values of the error probability. Although related, the problem and model are different, and therefore the results are incomparable.    

In this paper, we generalize the approximate majority problem to the case where the two initial states have small initial counts, and the goal of the other agents is to determine which baseline state/signal is more populous/stronger. We further generalize to the case where the counts of these initial states may change. 
The references technically closest to ours are the recent work on detection dynamics~\cite{alistarh2017robust, DBLP:conf/stoc/DudekK18}, which we have covered in the previous section. (We cover the relation between our results and the detection lower bound of~\cite{DBLP:conf/stoc/DudekK18} in Section~\ref{sec:discussion}.) 
The algorithm we analyze is a generalization of the detection dynamics considered by~\cite{alistarh2017robust}: 
in particular, if we merged the  $X$ and $Y$ states, we would obtain a similar algorithm to the basic version of \alg{}. 

We make several contributions relative to the latter reference. 
First, we consider a more general problem, which is closer to consensus dynamics than to detection/rumor-spreading. 
Second, we provide a significantly more accurate, and technically challenging analysis. 
Specifically,~\cite{alistarh2017robust} only provides an expected-value analysis for the detection problem. 
In contrast, we are able to provide strong concentration bounds for comparison, which can be further boosted via additional mechanisms, 
and provide a thorough exploration of time-space trade-offs for this problem.   
In addition, our analysis introduces a powerful and novel generalized Bernstein-type inequality, which should be a useful addition to the analysis toolbox of population dynamics.

\begin{figure*}[t]
    \centering
    \hspace{1cm}
    \begin{minipage}[t]{6cm}
        \centering
        \includegraphics[scale=0.45]{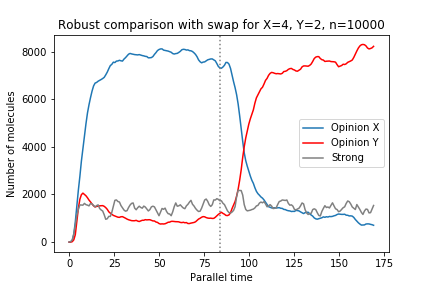}
    \end{minipage}
    \hspace{1em} \qquad
    \begin{minipage}[t]{6cm}
        \centering
        \includegraphics[scale=0.45]{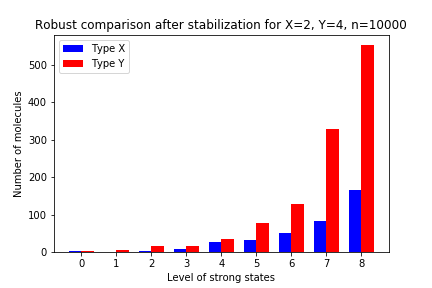}
    \end{minipage}
    \caption{Implementation results. In the left figure, we depict the counts of molecules with the $X$ opinion (blue) versus the $Y$ opinion (red) from the starting from an initial state where $X_0 > Y_0$. At parallel time $80$ (dotted), we switch these numbers, and record the change in counts. The gray line (bottom) counts the number of agents in \emph{strong} states. The right figure considers the same setup (after the switch), but counts the number of agents in each level of the strong states after stabilization.}
    \label{fig:algo}
\end{figure*}

\section{System Model and Problem Statement}
\label{sec:model}

\paragraph{Population Protocols.} 
A population protocol is a distributed system with $n \geq 2$ nodes, also called \emph{molecules} or \emph{agents}. 
Nodes execute a deterministic state machine with states from a finite set $S_n$,
whose size may be a function of $n$. 
Nodes are anonymous, so agents in the same state are identical and interchangeable. 
Consequently, the state of the system at any point is characterized by the number of nodes in each state with non-zero count. 
Formally, a \emph{configuration} $c$ is a function 
  $c: S_n \to \mathbb{N}$, where $c(s)$ represents the 
  \emph{number of agents in state $s$}.
Nodes interact in pairs, according to an outside entity called the \emph{scheduler}.
In this paper, we will assume a \emph{uniform random} scheduler, which picks every possible interaction pair uniformly at random, which corresponds to having a well-mixed solution.   

An algorithm, also known as a \emph{population protocol}, is defined as follows. 
We define the set $I_n$ of all allowed initial configurations of the protocol 
  for $n$ agents, a finite set of output symbols $O$, a transition function 
  $\delta_n : S_n \times S_n \rightarrow S_n \times S_n$, 
  and an output function $\gamma_n : S_n \rightarrow O$.
The system starts in one of the initial configurations $i_n \in I_n$ (clearly, $|i_n| = n$),
  and each agent keeps updating its local state following interactions with other agents, 
  according to the transition function $\delta_n$.
The execution proceeds in \emph{steps}, where in each step a new pair of agents 
  is selected uniformly at random from the set of all pairs. 
Each of the two agents updates its state according to the function $\delta_n$. 

\paragraph{Time, Space, Stabilization, and Self-Stabilization.} 
Our basic notion of \emph{steps} counts the number of interactions until some given  predicate holds on the entire population. Parallel time is defined as total number of pairwise interactions divided by the number of nodes $n$. 
We measure \emph{space} as the number of states which can be implemented by each node. 
We say that a population protocol is \emph{self-stabilizing}~\cite{AAFJ08} if it is guaranteed to converge to a set of output configurations which satisfy a given predicate from \emph{any} initial configuration, and for which every extension also satisfies the given predicate. The parallel time to reach those output configurations is the \emph{stabilization time}. 

\paragraph{Leaks and Robustness.} 
We now recall the definition of \emph{leak reactions (leaks)}, following~\cite{alistarh2017robust}. 
Given the above, any population protocol can be specified as  a sequence of transition rules of the form
$$ X + Y \rightarrow Z + T.$$

Given the set of such transitions defining a protocol,~\cite{alistarh2017robust} partitions protocol states into  \emph{catalytic states}, which never change count following \emph{any} reaction: for instance, state $C$ is catalytic if it only participates in reactions of the type $X + C \rightarrow Y + C,$ where $X$ and $Y$ are arbitrary. 
By contrast, \emph{non-catalytic states} can change their count, for instance to be created or transformed by the protocol into other states. 
In a nutshell, \emph{leaks} are spurious reactions which can consume and create arbitrary \emph{non-catalytic} species, from other \emph{non-catalytic} species.
Leaks are induced by the basic laws of chemistry. For instance, by the law of reversibility, every interaction has some (low) probability of being reversed; by the law of catalysis, every catalytic reaction can also occur in the absence of the catalyst state. In practice, leaks can cause any molecule type implemented by the algorithm to appear spuriously during its execution, with some low probability.

More formally, a leak is a reaction of the type $S \rightarrow S',$ where $S$ and $S'$ denote arbitrary \emph{non-catalytic} states.
For generality, in the following we will assume that the exact leak reactions are chosen \emph{adversarially}, but that their rate, that is, their probability of occurring at a given moment, will be upper bounded by a fixed parameter $\gamma$. 
An algorithm which maintains its correctness guarantees in spite of leaks is called \emph{leak-robust}~\cite{alistarh2017robust}. Notice that protocols such as the four-state exact majority algorithm~\cite{DV12} are \emph{not} leak-robust, since the correctness of their output crucially depends on having exact molecule counts throughout the execution. 

\paragraph{The Comparison Problem.} 
We are given two \emph{baseline} states, $X_0$ and $Y_0$, present in any initial configuration in possibly small counts. 
One of these states has higher count than the other: w.l.o.g., we will assume that $|X_0| \ge C |Y_0|$ for some constant $C > 1$. 
For simplicity, we first discuss the \emph{static} version of the problem, in which the initial counts of  $X_0$ and $Y_0$ are fixed, and do not change throughout the execution. 

The goal of the (static) comparison problem is to design a protocol which can decide on which of the states $X_0, Y_0$ has higher count. Thus, each node should output $X$ if $|X_0| \ge C |Y_0|$, and $Y$ if $|Y_0| \ge C |X_0|$. Otherwise, any output X/Y is allowed. 
We will consider protocols with probabilistic outputs, i.e. a node might return the correct output with probability $2/3$ or $1 - o(1)$. The \emph{convergence time} is the time it takes for the protocol to reach a state where the output predicate holds with a given probability.  

In the \emph{self-stabilizing} variant of the problem, we assume that the protocol can be initialized in an arbitrary state, and should still converge to the correct decision. In particular, we will consider the interesting \emph{dynamic input} case where the proportion of $X_0 / Y_0$ changes during the execution. 
The \emph{self-stabilization time} is the time it takes for the protocol to converge to the correct decision from such an arbitrary configuration. 
In the \emph{robust} variant of the problem, we consider static inputs (i.e. fixed $X_0$ and $Y_0$ counts) but allow spurious leak reactions, as described above.

\section{The \alg{} Robust Comparison Algorithm}

\subsection{The Baseline Algorithm}

In this section, we present the baseline variant of the algorithm, which ensures a \emph{constant} separation between the two states, in favor of the more numerous one, with high probability. 
In the next section, we will build on this algorithm to boost the fraction of nodes which correctly identify the majority to $(1 - o(1))$. 

\paragraph{Algorithm Description.} 
 Initially, we have a fixed set of nodes in states $X_0$ and $Y_0$, whereas all other nodes may have arbitrary states (for simplicity, we may assume that all are in a neutral state $N$). 
Thus, each node's state will be either $X_0, X_1,\dots,X_s$, $N$ or $Y_0, Y_1,\dots,Y_s$, where $s$ is a \emph{level parameter}, whose value is specified later in the analysis. States $X_0,\dots,X_s$ will correspond to output $X>Y$ with decreasing ``confidence" (symmetrically for $Y_i$ states) and $N$ is a neutral state (it roughly corresponds to both states $X_{s+1}$ and $Y_{s+1}$ being merged). We call a molecule \emph{strong} if its state is not $N$. The state changes according to the following rules:

\begin{equation*}
\begin{aligned}
\text{For all}\ &1\le i \le s: \\
X_0 + X_i &\rightarrow X_0 + X_1 \\
X_0 + Y_i &\rightarrow X_0 + X_1 \\
Y_0 + X_i &\rightarrow Y_0 + Y_1 \\
Y_0 + Y_i &\rightarrow Y_0 + Y_1 \\
\end{aligned}
\quad\quad
\begin{aligned}
\text{For all}\ &1\le i < s: \\
X_i + N &\rightarrow X_{i+1} + X_{i+1} \\
Y_i + N &\rightarrow Y_{i+1} + Y_{i+1} \\
X_i + Y_i &\rightarrow X_{i+1} + Y_{i+1}\\
\text{~}\\
\end{aligned}
\end{equation*}

\begin{equation*}
\begin{aligned}
\text{For all}\ &1\le i \le j \le s; i\neq s:\\
X_i + X_j &\rightarrow X_{i+1} + X_{i+1} \\
Y_i + Y_j &\rightarrow Y_{i+1} + Y_{i+1} \\
\end{aligned}
\quad\quad
\begin{aligned}
\text{For all}\ &1\le i < j \le s:\\
X_i + Y_j &\rightarrow X_{i+1} + X_{i+1} \\
Y_i + X_j &\rightarrow Y_{i+1} + Y_{i+1} \\
\end{aligned}
\end{equation*}

\begin{equation*}
\begin{aligned}
X_s + X_s \rightarrow& N + N \\
X_s + Y_s \rightarrow& N + N \\
Y_s + Y_s \rightarrow& N + N \\
Y_s + X_s \rightarrow& N + N \\
\end{aligned}
\quad\quad
\begin{aligned}
X_s + N &\rightarrow N + N \\
Y_s + N &\rightarrow N + N \\
X_0 + N &\rightarrow X_0 + X_1\\
Y_0 + N &\rightarrow Y_0 + Y_1\\
\end{aligned}
\end{equation*}

The intuition is that the state of molecules is used to spread the information about the number of initial molecules in $X_0$ and $Y_0$ states, which never change, among all other molecules, while we maintain approximately the ratio $\frac{|X_i|}{|Y_i|} \approx \frac{|X_0|}{|Y_0|}$. This is done by confidence levels $X_1,\dots,X_s$ (resp. $Y_1,\dots,Y_s$). A molecule decreases its confidence by one during each reaction but it spreads its information to the less confident molecule in the reaction. When the confidence passes the threshold $s$, the molecule moves to a neutral state $N$. We will show that the number of molecules in consecutive levels roughly doubles at every level, with high probability.
We present an experimental illustration in Figure~\ref{fig:algo}.

\paragraph{Guarantees.} 
Our main technical result is a general analysis of this dynamics, which can be summarized as follows.

\begin{theorem}
\label{th:mainresult-brief}
For any choice of constants $C_2>C_3>1$, there exists constant $C_1$ such that if $|X_0|, |Y_0| \geq C_1 \log n$, and $\frac{|X_0|}{|Y_0|} \ge C_2$, the  algorithm stabilizes in parallel time $\mathcal{O}(\log n)$ to a configuration where $\frac{\sum_{i=1}^s |X_i|}{\sum_{i = 1}^s |Y_i|} \ge C_3$ with high probability. 
\end{theorem}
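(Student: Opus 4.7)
The plan is to analyze the dynamics by induction on the level index $i$, showing that each level count $|X_i|$ (resp.\ $|Y_i|$) rapidly concentrates around a target value, and that these targets preserve the baseline ratio up to a controlled multiplicative loss per level. To identify the targets, I would first compute the expected one-step change of $|X_i|$ and $|Y_i|$ conditioned on the current configuration. Inspecting the transition rules, the drift of $|X_i|$ (for $1 \le i < s$) has an inflow contribution from every reaction that produces $X_i$ (namely pairs of levels whose smaller index is $i-1$, pairs $X_0 + *$ when $i=1$, and $X_{i-1}+N$ reactions) and an outflow contribution from every reaction consuming an $X_i$. Setting the drift to zero in expectation yields a recurrence of the form $|X_i| \approx \alpha \cdot |X_{i-1}|$ for some constant $\alpha > 1$, and symmetrically for $|Y_i|$, so that the equilibrium counts grow roughly geometrically until either the $X_s$ cap or the constraint $\sum_i (|X_i|+|Y_i|) \le n$ activates. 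The key observation from this calculation is that, at equilibrium, $|X_i|/|Y_i|$ is a contraction of $|X_{i-1}|/|Y_{i-1}|$ by at most a factor $(1+O(1/|X_{i-1}|))$, so the ratio remains bounded below by a constant across all $s = \log n + \Theta(\log \log n)$ levels provided $C_1$ is large enough.

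Second, I would prove concentration level-by-level, using the generalized martingale Bernstein-type inequality advertised in the introduction. The key inductive claim: if at time $t$ the counts at all levels $< i$ are within a $(1 \pm \varepsilon_{i-1})$ factor of their target values, then within $O(1)$ additional parallel time the count at level $i$ concentrates within $(1 \pm \varepsilon_i)$ of its target with probability $1 - n^{-\Omega(1)}$, where $\varepsilon_i \le \varepsilon_{i-1} + O(\sqrt{\log n / T_i})$ and $T_i$ is the target count at level $i$. The standard Bernstein bound is insufficient because the drift of $|X_i|$ at step $t$ is itself a noisy signal: it depends on the fluctuating lower-level counts, so the martingale difference cannot be uniformly bounded and its variance is not a deterministic function of the target. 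The generalized inequality decouples this, allowing the lower-level noise to enter only additively through $\varepsilon_{i-1}$, rather than being multiplied at each step.

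Third, iterating the induction across $s = O(\log n)$ levels, the total accumulated error $\varepsilon_s = \sum_i O(\sqrt{\log n / T_i})$ is dominated by the first (smallest-count) levels and remains below any prescribed constant once $C_1$ is chosen large enough, since $T_i \ge 2^i C_1 \log n$ grows geometrically. This preserves the ratio $\sum_i |X_i| / \sum_i |Y_i| \ge C_3$ whenever $C_2 > C_3$ and $C_1$ is sufficiently large. For the time bound, each level equilibrates within $O(1)$ parallel time (the drift toward equilibrium is proportional to the current deviation, while variance is sub-linear in the count), so the cascade from level $1$ through level $s$ completes in $O(\log n)$ total parallel time; to handle arbitrary initial configurations, I would additionally argue that any spurious counts at high levels decay within $O(\log n)$ time through the outflow reactions, leaving the system in the regime where the inductive argument applies. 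The main obstacle throughout is precisely the non-multiplicative propagation of noise across levels, which is exactly what the new concentration inequality is engineered to control.
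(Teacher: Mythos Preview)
Your overall plan---level-by-level induction, targets found from the drift recursion, concentration via the generalized Bernstein inequality, and additive (not multiplicative) accumulation of errors $\varepsilon_i \le \varepsilon_{i-1} + O(\sqrt{\log n / T_i})$---matches the paper's approach closely. Two remarks on the decomposition: the paper first proves concentration of the \emph{cumulative} counts $R_c = \sum_{j\le c}(|X_j|+|Y_j|)$ and only afterwards analyzes $|X_c|$ using the already-established bounds on $r_c, r_{c-1}$; this is why their error recursion has an extra additive term $2\varepsilon\,\tilde r_c/(1-\tilde r_c)$ coming from the fluctuation of the totals, which still sums to a constant. Your single-pass induction on all lower-level $X_j,Y_j$ simultaneously would have to absorb this term, but that is not a real obstacle.

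There is, however, a genuine gap in your time analysis. You assert that ``each level equilibrates within $O(1)$ parallel time,'' but this is false: the one-step drift toward equilibrium is $\E[B(t+1)\mid t] = (1-\tfrac{\lambda}{n})B(t)+\tfrac{\lambda}{n}A(t)$ with $\lambda=\Theta(1)$, so the deviation contracts by only a $(1-\Theta(1/n))$ factor per interaction. Reaching within a constant factor of the target therefore requires $\Theta(n\log n)$ interactions, i.e.\ $\Theta(\log n)$ parallel time \emph{per level}, and the concentration theorem in the paper in fact needs $\Theta(n\log n\log\log n)$ interactions per level. Cascading over $s=\Theta(\log n)$ levels, the direct induction yields only $O(\log^2 n\log\log n)$ parallel time, which is exactly what the paper's Theorem~\ref{th:mainresult} states.

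To obtain the claimed $O(\log n)$ bound of Theorem~\ref{th:mainresult-brief}, the paper uses a separate \emph{coupling} argument (Section~\ref{sec:bootstr}): one couples the actual process with (i) a copy started from an all-$N$ configuration plus the baseline states, and (ii) a copy with the baseline states removed. Process~(ii) decays to all-$N$ in $O(n\log n)$ interactions by a potential argument (Lemma~\ref{lem:resettime}), and a monotone coupling (Observation~\ref{obs:coupling}) then transfers the concentration of the already-mixed process to the actual one after only $O(n\log n)$ further interactions. Your final sentence about ``spurious counts at high levels decay within $O(\log n)$ time'' is half of this idea, but you invoke it only to handle bad initial states; you do not recognize that the coupling is what shaves the extra $\log n\cdot\log\log n$ factor off the convergence time itself. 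Without this step, your argument proves the right concentration but with the wrong time bound.
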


Intuitively, the results says that, given any multiplicative gap $C_2 > 1$ between the baseline states, a target gap $C_3$ between the total state counts can be achieved as long as the baseline counts are a large enough constant multiple $C_1$ of $\log n$. 
We next discuss this result and its extensions in detail. 

\subsection{Algorithm Extensions and Discussion}
\label{sec:discussion}

\paragraph{Trade-off between initial and total counts.}
From the point of view of the gap between baseline states, the guarantees in Theorem \ref{th:mainresult-brief} are governed by a trade-off which is similar in nature to Chernoff bounds. 
We stated the result for the lowest possible values of the initial counts $|X_0|$ and $|Y_0|$ (up to constants). Our result leads to ``constant'' boosting of the gap, which is the best possible in this general case. 
However, we can strengthen the result to allow for lower base separation. 
For instance, if we allow the baseline states to have slightly higher counts $|X_0|,|Y_0| = \Omega(\log^2 n)$, then it is enough to set $C_2,C_3 = 1+o(1)$  where $C_2>C_3$ with $o(1)$ as small as $O(1/\sqrt{\log n})$.

It is reasonable to ask whether the count separation between the baseline states may be \emph{constant additive} instead of  multiplicative. Our analysis can be applied to the case where the gap is additive by simply taking the parameter $C_2$ to be a function of $|Y_0|$. However, as noted when discussing the relationship to exact majority algorithms, the analysis will break down in the case when the gap is an additive constant \emph{and} the counts $|X_0|$ and $|Y_0|$ are logarithmic. This is simply because random fluctuations in the counts at intermediate levels can negate the relative advantage of the more populous baseline state, leading to fluctuations in the output decision. 

\paragraph{Boosting Precision.} The algorithm described in the previous subsection ensures constant separation--roughly, we can guarantee with a proper choice of parameters that, say, at least $2n/3$ of all molecules have the correct output, and at most $n/3$ have the wrong output.
Now we describe a way of amplifying this correctness guarantee. 
We describe it with respect to our algorithm, but the transformation is generic and would apply to any comparison algorithm.

Assume that, in addition to their state, molecules are equipped with a counter that contains an integer value in the interval $[-m, m]$, where $m$ is a parameter. 
The counter is increased by one if a molecule reacts with a strong molecule of type $X_i$, and decreased by one if it reacts with a molecule of type $Y_i$. 
If a molecule reacts with a molecule in state $N$, the counter remains unchanged. 
The output function $\gamma_n$ maps all states with a positive counter to output $X>Y$ and all states with a negative counter to $Y>X$. 

Note that, when the confidence levels stabilize in the baseline algorithm, the counter should function similarly to a random walk biased towards the majority. 
More precisely, it is biased towards $+\log \log n$ if $|X_0| > |Y_0|$, and vice versa. 
Because there are $\Omega\left(n\right)$ strong molecules, each one reacts with enough strong molecules, and therefore the random walk should quickly converge to its stationary distribution. 
The stationary distribution will give us the estimate that there are only $\bigo(n/\log n)$ molecules with wrong value of counter in expectation.

There are two ways of implementing the above dynamics. 
The first method has every molecule participate in the counting process. 
This requires increasing the number of states to $\bigo(s \cdot m)$, but has the advantage that each molecule is participating in the output.
Second, similarly to~\cite{GP16}, one can split the population initially into two roughly equal-size parts.  
The first half implements the original amplification algorithm, while the second half consists of molecules implementing the random-walk counter. 
Thus, the number of states becomes $\bigo(s + m)$, but with the disadvantage that a constant fraction of all molecules do not produce any output at all.

The above construction ensures that the algorithm stabilizes in time $\bigo(\log n)$ to a configuration where at most $\bigo(n / \log n)$ have the incorrect output. It uses $\bigo(\log n \log \log n)$ states. The proof is provided in Section~\ref{sec:precision}.

\paragraph{Time vs. space tradeoff.}
In Section~\ref{sec:trade-off}, we explore different variants of this algorithm which trade off a lower state space for higher convergence time. 
Interestingly, we will show that there exist a variant with $o(\log n)$ states per node, which converges in $\textnormal{ polylog } n$ time, and a variant with $\bigo( \log \log n )$ states per node which still converges in sub-linear time. 
Since these variants require a more careful re-definition of the protocol, we present them separately in the corresponding section. \cite{DBLP:conf/stoc/DudekK18} showed the following  time-space complexity trade-off for detection/comparison:
\begin{theorem}[\cite{DBLP:conf/stoc/DudekK18}, corollary of Theorem 4.1]
Any protocol that solves detection using $T$ parallel time by convergence to a stationary distribution\footnote{More formally, by this the authors mean that the algorithm converges to states from which the state counts cannot vary by more that $n^{\alpha}$, where $\alpha > 0$ is a constant. Our algorithm satisfies this condition. } requires $\Omega\left(\left(\log \frac{\log n}{\log T}\right)^c\right)$ states\footnote{This bound follows from the extended statement of this theorem, mentioned at the end of its relevant section, where non-constant number of states lowerbounds are considered.} for some absolute positive constant $c$.
\end{theorem}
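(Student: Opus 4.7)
The plan is to prove the bound by an indistinguishability/coupling argument between two initial configurations that a correct detection protocol \textbf{must} distinguish, combined with a recursive analysis of how fast information can spread when only $s$ states are available. Let $C_0$ be a configuration containing no detectable agent (so every correct protocol must eventually output ``absent''), and let $C_1$ be identical to $C_0$ except that a single distinguished agent starts in the detectable state $D$. Correctness for convergence to a stationary distribution forces the $n$-agent distributions after $T$ interactions, $\mu_0^T$ and $\mu_1^T$, to be separated in total variation by $\Omega(1)$. The core of the proof is therefore to upper-bound $\|\mu_0^T - \mu_1^T\|_{TV}$ in terms of $s$ and $T$ and invert.

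The first step is to set up a natural coupling of the two executions that keeps them identical on all agents except those that have been ``tainted'' by the initial $D$ particle; call this tainted set $A_t$ at interaction step $t$. In the $C_1$ execution, $A_0 = \{D\}$, and $A_t$ grows only when a tainted agent interacts with an untainted one and the protocol prescribes a different post-state than in the $C_0$ execution. The second step is the key recursive estimate: I would partition the state set into levels $L_0, L_1, \ldots, L_{s-1}$ according to the ``rank'' of the earliest interaction chain that can produce a given state from $D$, and show by induction on the level that the number of agents in $L_k$ at time $t$ is upper bounded, with high probability, by a quantity of the form $n^{1-\beta_k}$ where $\beta_k$ follows a recursion $\beta_{k+1} \ge \beta_k - O(\log(T/n)/\log n)$. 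The recursion comes from the fact that every interaction in which an $L_k$ agent infects an untainted agent has probability proportional to $|L_k|/n^2$, so over $T$ steps the expected growth of $|L_{k+1}|$ is bounded by $T \cdot |L_k|/n$, and an application of standard martingale concentration tightens this bound.

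The third step is to solve the recursion: unfolding $s$ levels, $|A_T|$ can remain $o(n)$ as long as roughly $(T/n)^s \lesssim n$, i.e.\ $s \gtrsim \log n / \log(T/n)$. Any taint set of size $o(n)$ lets the coupling succeed with probability $1-o(1)$, so in that regime $\|\mu_0^T - \mu_1^T\|_{TV} = o(1)$, contradicting correctness. Finally, I would polish the bound to the stated form $s = \Omega((\log(\log n / \log(T/n)))^c)$: the additional logarithm and exponent $c$ appear because the recursion is not sharp at every level — at each level one loses only an exponentially small factor in the ``level-depth,'' so the genuine growth is doubly-exponential in the depth, giving the $\log \log$ on the outside and a constant-power $c$ to absorb slack from the concentration inequalities and from reducing to stationary convergence rather than convergence with high probability.

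The main obstacle is the recursive bound on $|L_{k+1}|$ in terms of $|L_k|$ in the third step, because the different levels are highly correlated (an interaction at level $k$ can affect later levels in non-monotone ways) and one has to control the coupling failure probability simultaneously across all levels for $T$ steps. I expect to handle this with a union bound over levels combined with a Freedman-style martingale inequality applied to the increments of $|L_k|$ conditional on the history of lower levels — the same spirit, if not the exact statement, as the generalized Bernstein tool developed earlier in this paper — and finally to use the ``convergence to stationary distribution'' hypothesis to argue that the distinguishing event must hold not just in expectation but with probability bounded away from zero, which is what drives the final lower bound on $s$.
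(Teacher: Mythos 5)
This statement is not proved in the paper: it is a cited result, attributed explicitly to \cite{DBLP:conf/stoc/DudekK18} and labeled ``corollary of Theorem 4.1'' of that reference. There is therefore no proof in the present paper to compare your proposal against; any comparison would have to be against the Dudek--Kosowski proof itself.

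Taken on its own merits, the proposal contains a structural error that you should confront head-on. The taint/level-spreading argument you outline---if it could be made rigorous---would show that any detection protocol needs $\Omega\bigl(\log n / \log(T/n)\bigr)$ states, which is \emph{exponentially stronger} than the cited bound $\Omega\bigl(\bigl(\log \tfrac{\log n}{\log(T/n)}\bigr)^{c}\bigr)$. The ``polishing'' step you then invoke to descend to the weaker form is described backwards: making a lower-bound argument lossier at each level can never turn a strong lower bound into a weak one by magic. The paper itself observes that its upper bound of $\Theta\bigl(\tfrac{\log n}{\log(T/(n\log n))} + \log\log n\bigr)$ states leaves ``an exponential gap'' with the cited lower bound, which is strong evidence that no one knows how to prove the bound you would be deriving.

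The reason the naive spreading argument breaks down is precisely the thing the cited theorem's stationary-distribution hypothesis is designed to exclude: a protocol with $s$ states is under no obligation to implement a monotone ``level'' structure where each new state can only be reached through a longer infection chain. States can be recycled, reactions need not be tree-like, and the protocol can encode depth implicitly in oscillating counts rather than in a fresh state per level. Indeed, Dudek and Kosowski show that detection is solvable with $\bigo(1)$ states by a protocol that does not stabilize---a direct counterexample to your $|L_{k+1}| \lesssim (T/n)\,|L_k|$ recursion for general protocols. The actual lower-bound proof must therefore use the stationary-distribution hypothesis in a much more delicate way than a coupling/taint-set argument, and it is this extra slack that produces the outer logarithm and the unspecified exponent $c$. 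If you want to engage with the real argument, you need to read Theorem~4.1 in \cite{DBLP:conf/stoc/DudekK18} rather than reconstruct it from first principles, because the first-principles approach you chose provably overshoots.
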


Our protocol gives a trade-off for comparison (and thus detection) with $\Theta\left(\frac{\log n}{\log \frac{T}{\log n}} + \log \log n\right)$ states for any $T \ge \log n$, leaving an exponential gap between lower and upper bounds.

\section{Analysis of the Baseline Algorithm}

\label{sec:conc}

In this section we will focus on the concentration properties of $|X_i|$ and $|Y_i|$, the number of molecules of type $X_i$ and $Y_i$, respectively, for each level. 
Intuitively, given initial counts of $X_0, Y_0$, the argument establishes (i)  upper- and lower-bounds on the counts of $X_i,Y_i$ in the ``steady state'' of the protocol, 
(ii) shows that the protocol concentrates around those bounds, and (iii) that concentration occurs quickly. 

\paragraph{Notation.} Denote 
$R_i = \bigcup_{j \le i} (X_j + Y_j)$. Also denote $x_i=\frac{|X_i|}{n}$, $y_i=\frac{|Y_i|}{n}$ and 
$r_i = \frac{|R_i|}{n} = \sum_{j \le i} (x_j + y_j)$.

To specify value of some variable after precise number of interactions, we add $(t)$ after the variable -- i.e. $x_i(t)$ denotes the probability that a randomly chosen molecule $t$ steps of protocol is of type $X_i$. 

%
%

\subsection{The Detection Problem}
\label{sec:warmup}

The goal of this section is to develop some of the intuition behind the analysis, as well as some preliminary results, by providing bounds on the joint count at each level, denoted by $U_i$. 
Recall that $s \geq 1$ is the maximum level achievable by states $X_i$ and $Y_i$. 
We begin with the observation that if we replace all states $X_i$ and $Y_i$ with $U_i$ in the algorithm, then the interaction rules become: 

\begin{equation*}
\begin{aligned}
\text{For all}\ &1\le i \le s: \\
U_0 + U_i &\rightarrow U_0 + U_1 \\
\text{~}\\
\text{For all}\ &1\le i < s: \\
U_i + N &\rightarrow U_{i+1} + U_{i+1} \\
U_s + N &\rightarrow N + N \\
\end{aligned}
\quad\quad
\begin{aligned}
\text{For all}\ &1\le i \le j \le s; i\neq s:\\
U_i + U_j &\rightarrow U_{i+1} + U_{i+1} \\
U_s + U_s &\rightarrow N + N \\
\end{aligned}
\end{equation*}
Note that this closely matches the detection dynamics of~\cite{alistarh2017robust}: intuitively, in this case, we are not trying to compare the counts of two species, but instead trying to detect the presence of a single species $X_0 + Y_0$ in the initial solution. 
We note that the analysis in~\cite{alistarh2017robust} only provides \emph{expected} bounds on the species count at every level. 
Thus, the preliminary results of this section illustrate our analysis technique by tightening the bounds for this detection algorithm to characterize concentration. 
In turn, concentration bounds are essential to analyze the behavior of the comparison dynamics we consider. 

Recall that $r_i = |R_i| / n$ for any level $i$. We begin by introducing some auxiliary variables $\tilde{r}_i$, for each level $i$, which are intuitively the steady-state (expected) values to which the level counts should converge in the limit. 
Let also $\tilde{R}_i = n \cdot \tilde{r}_i$. 
Note that defining the values $\tilde{r}_i$ directly in terms of the convergence of the process can be difficult, so instead we will directly provide an operational definition for them. 
More precisely, we define these values recursively as follows: 
$$\tilde{r}_0 = r_0 = \frac{|X_0|+|Y_0|}{n} \textnormal{ and } \tilde{r}_{i+1} = \tilde{r}_i \cdot (2 - \tilde{r}_i), \forall s > i \geq 0,$$
where the recurrence follows from the observation that an agent is in state $R_{i+1}$ iff in its last interaction, at least one of the interacting agents was in state $R_i$.\footnote{One can also write ODE for concentration of level counts, and observe that \emph{conditioned} on constant value of $r_i$, $r_{i+1}$ converges exponentially towards $r_i \cdot (2-r_i)$.}
We can expand this recursion to obtain the following estimates for these level counts. 

\begin{observation}
For any $i \geq 1$, it holds that $\tilde{r}_i = 1 - \left(1 - \tilde{r}_0\right)^{2^i}.$
In particular, we have $\tilde{r}_i = \Theta(\min(1, 2^i \cdot \tilde{r}_0)) .$
\end{observation}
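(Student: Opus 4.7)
The plan is to prove the closed-form identity by a one-line induction on $i$, and then extract the $\Theta$-bound by comparing $(1-\tilde{r}_0)^{2^i}$ to the exponential $e^{-2^i \tilde{r}_0}$.

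For the identity $\tilde{r}_i = 1 - (1-\tilde{r}_0)^{2^i}$, I would induct on $i$. The base case $i=0$ holds since $1 - (1-\tilde{r}_0)^{1} = \tilde{r}_0$. For the inductive step, the hypothesis gives $1 - \tilde{r}_i = (1-\tilde{r}_0)^{2^i}$; substituting into the recurrence $\tilde{r}_{i+1} = 1 - (1-\tilde{r}_i)^2$ yields
$$\tilde{r}_{i+1} = 1 - \bigl((1-\tilde{r}_0)^{2^i}\bigr)^2 = 1 - (1-\tilde{r}_0)^{2^{i+1}},$$
closing the induction.

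For the asymptotic bound, I would set $u := 2^i \tilde{r}_0$ and apply two standard inequalities. The upper direction uses Bernoulli's inequality $(1-\tilde{r}_0)^{2^i} \geq 1 - 2^i \tilde{r}_0$ (valid whenever the right-hand side is nonnegative) combined with the trivial bound $\tilde{r}_i \leq 1$, giving $\tilde{r}_i \leq \min(1, u)$. For the lower direction I would use $(1-\tilde{r}_0)^{2^i} \leq e^{-u}$ to get $\tilde{r}_i \geq 1 - e^{-u}$, and then invoke the elementary inequality $1 - e^{-u} \geq (1 - e^{-1}) \min(1,u)$ for all $u \geq 0$, which avoids a case split and immediately delivers $\tilde{r}_i \geq (1 - e^{-1}) \min(1, 2^i \tilde{r}_0)$. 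Combining the two directions gives the stated $\Theta$-bound.

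No step here presents a real obstacle: the identity is a routine induction and the $\Theta$-bound is a standard comparison between $(1-x)^k$ and $e^{-kx}$. The only minor care needed is choosing the form of the lower bound on $1 - e^{-u}$ so that the regime $u$ near $0$ (where the bound grows linearly in $u$) and the regime $u \gtrsim 1$ (where it saturates) are both handled by a single inequality; otherwise one can simply split on whether $u \leq 1$ and use $1 - e^{-u} \geq u/2$ in the small regime and $1 - e^{-u} \geq 1 - e^{-1}$ in the large one.
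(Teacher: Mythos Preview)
Your proof is correct; the paper states this observation without proof, so there is no approach to compare against. Your induction for the closed form and the comparison of $(1-\tilde r_0)^{2^i}$ with $e^{-2^i\tilde r_0}$ (together with the concavity-based bound $1-e^{-u}\ge(1-e^{-1})\min(1,u)$) are exactly the natural way to fill in the omitted details.
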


Our goal will be to provide a \emph{concentration bound} for the values of the level counts $r_i$ to match these steady-state values. 
Broadly, our setup is as follows. We will fix a level index $c \geq 0$ and time $t$, such that, at this time, the level counts $R_i$ at levels $i \leq c$ are well-concentrated around their means $\tilde{R}_i$, with high probability. 
Then, we will show that there exists a time $t' \geq t$, such that, with high probability, the level count at level $c + 1$ is concentrated around its own predicted mean $\tilde{R}_{c + 1}$. 
 More precisely, let us fix a level $c$ and a time $t$, and assume that there exists a constant $\xi_c > 0$ such that  $\Big||R_c(t)| - \tilde{R}_c\Big| \le \xi_c \tilde{R}_c$, with high probability.  
We will proceed to prove that  there exists a constant $\xi_{c + 1}$ and a time $t'\geq t$  such that $\Big||R_{c + 1}(t')| - \tilde{R}_{c + 1}\Big| \le \xi_{c+1} \tilde{R}_{c + 1}$ given a sufficiently large time interval $T = t' - t$. 

The argument will begin by analyzing the evolution of the level counts at time $t+1$.
In particular, denote by $\Delta_1(t),\Delta_2(t),\Delta_3(t),\Delta_4(t) \in \{0,1\}$ the indicator variables for the following events at step $t$, which govern the evolution of $R_{c + 1}(t + 1)$:
\begin{itemize}
\item $\Delta_1(t) = 1$ iff first reacting agent was from $R_{c+1}$,
\item $\Delta_2(t) = 1$ iff second reacting agent was from $R_{c+1}$,
\item $\Delta_3(t) = 1$ and $\Delta_4(t) = 1$ iff any of the reacting agents were from $R_{c}$ (in fact $\Delta_3(t) = \Delta_4(t)$ is always satisfied).
\end{itemize}

We obtain the following recurrence on the expected value of $R_{c + 1}(t + 1)$: 
\begin{align*}
\mathbb{E}\Big[|R_{c+1}(t+1)|\ |\ t\Big] &= \mathbb{E}\Big[|R_{c+1}(t)| - \Delta_1(t) - \Delta_2(t) + \Delta_3(t) + \Delta_4(t)\ |\  t\Big]\\
&= |R_{c+1}(t)| - 2 r_{c+1}(t) + 2 [1 - (1-r_{c}(t))^2] \\
&= \left(1-\frac{2}{n}\right) |R_{c+1}(t)| + \frac{2}{n} A(t)
\end{align*}
where we define $A(t) = n \cdot [1 - (1-r_{c}(t))^2]$.
Second, we bound the variance by direct calculation:
\begin{align*}
\mathrm{Var}\Big[|R_{c+1}(t+1)|\ |\  t\Big] &= \mathrm{Var}\Big[|R_{c+1}(t+1)| - |R_{c+1}(t)|\ |\ t\Big] \\ 
 & = \mathrm{Var}[\Delta_3(t)+\Delta_4(t)-\Delta_1(t)-\Delta_2(t)\ |\  t]\\
&\le 4 (r_{c+1}(t) + r_{c+1}(t) + (1-(1-r_c(t))^2) \\ & + (1-(1-r_c(t))^2)) = 8 \frac{|R_{c+1}(t)|}{n} + 8 \frac{A(t)}{n}.
\end{align*}

Finally, we use the induction hypothesis to bound the deviation of $\tilde{R}_{c + 1}$ from $A(t)$, with high probability, as 
\begin{align*}
|A(t)-\tilde{R}_{c+1}| &=\Big| n [1 - (1-r_{c}(t))^2] - n  (1 - (1-\tilde{r}_c)^2) \Big| \\ & \le n  \xi_c \tilde{r}_c (2 - \tilde{r}_c) \leq 2 \xi_c \tilde{R}_{c+1}.
\end{align*}

\subsection{The Concentration Theorem} 

We now take a step back, and examine the claims we have already proven, and their relationship to our target. 
We wish to obtain a concentration bound on the level count $|R_{c + 1}|$ in terms of its predicted steady-state value $\tilde{R}_{c + 1}$. 
We have a handle on the expected value of $|R_{c + 1}|$ and on its variance, but these values critically depend on the quantity $A(t)$. 
At the same time, we also have a strong probabilistic bound on how much $A(t)$ can vary, by the last inequality. 
A natural candidate to establish a concentration bound on $|R_{c + 1}|$ would be to recognize that it has super-martingale behavior, and apply a Bernstein-type inequality for its concentration around its mean. 
However, it is hard to see how to apply this result to our setting, in particular due to the presence of the ``noise'' term $A(t)$. 
Fortunately, we are able to prove the following   concentration result instead. 

\newcommand{\maintoolA}{
Fix parameters $n \geq 1$ and $a \leq n$ with $a = \Omega( \log n )$, and $\varepsilon \leq 1$. Further, fix constants $\lambda, \gamma,\delta,\eta = \bigo(1)$. 
Let $t \geq t_0$ denote time, and let $A(t),B(t) \in [0,n]$ be stochastic processes such that for all time steps $t \ge t_0$ the following hold:}
\newcommand{\maintoolB}{
\begin{enumerate}
\item $|A(t) - a| \le \varepsilon a$,
\item $\E[ B(t+1)\ |\ A(t),B(t) ] = (1-\frac{\lambda}{n})  B(t) + \frac{\lambda}{n}  A(t)$,
\item $|B(t+1)-B(t)| \le \gamma$,
\item $\Var[ B(t+1)\ |\ A(t),B(t) ] \le \delta  \frac{B(t)}{n} + \eta  \frac{A(t)}{n}$.
\end{enumerate}}
\newcommand{\maintoolC}{
Then there exists an interval length $T' = \Theta(\frac{1}{\lambda} n \log n \log \log n)$ such that for any $t' \ge t_0 + T'$ the following holds with high probability:
$$|B(t') - a| \le  \varepsilon a + \bigo(c_1 \sqrt{a \log n} + \log n),$$
for $c_1 = \sqrt{\frac{\delta+\eta}{\lambda}}$.}
\begin{theorem}
\label{th:general_martingale}
\maintoolA\maintoolB\maintoolC
\end{theorem}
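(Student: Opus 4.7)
The plan is to recognize that condition (2) makes $B$ a discrete Ornstein--Uhlenbeck process centered at the moving target $A(t)$ with mean-reversion rate $\lambda/n$, so the proof proceeds by unrolling this recursion and applying a Freedman-type martingale inequality. Define the martingale difference $M(s+1) := B(s+1) - \E[B(s+1)\mid A(s),B(s)]$; condition (2) then reads
$$B(t+1) - a = \left(1-\tfrac{\lambda}{n}\right)(B(t)-a) + \tfrac{\lambda}{n}(A(t)-a) + M(t+1).$$
Iterating from $t_0$ to $t'$ with the geometric weights $w_s := (1-\lambda/n)^{t'-1-s}$ writes $B(t')-a$ as the sum of an initial-condition term, a drift term $\sum_s w_s (\lambda/n)(A(s)-a)$, and a weighted martingale $Z := \sum_s w_s M(s+1)$.

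The two deterministic pieces are straightforward. The initial term decays like $n \cdot (1-\lambda/n)^{T'}$; choosing $T' = \Theta((n/\lambda)\log n \log\log n)$ drives it to $n^{-\Omega(\log\log n)} = o(1)$, which is precisely the reason for the $\log\log n$ slack in the theorem's interval length. The drift term is bounded by $\varepsilon a$ using (1) together with $(\lambda/n)\sum_s w_s \le 1$, producing the leading $\varepsilon a$ contribution. The substantive step is a Freedman tail bound on $Z$: condition (3) yields $|w_s M(s+1)| \le 2\gamma$, while condition (4) combined with $\sum_s w_s^2 = \Theta(n/\lambda)$ gives a predictable quadratic variation of order $(\delta+\eta) a / \lambda = c_1^2 a$, provided we may replace $B(s)$ by $O(a)$ in the variance bound. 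Freedman's inequality then yields $|Z| = \bigo(c_1\sqrt{a\log n} + \gamma\log n)$ with high probability, which is exactly the second error term in the statement; since $a = \Omega(\log n)$, the additive $\log n$ term absorbs constant factors in $\gamma$.

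The main obstacle is the circular dependence in the variance bound: the term $\delta B(s)/n$ involves $B(s)$ itself, which is what we are trying to control. I would resolve this via a stopping-time bootstrap, letting $\tau$ be the first $s>t_0$ at which $|B(s)-a|$ exceeds $\varepsilon a + K(c_1\sqrt{a\log n}+\gamma\log n)$ for a sufficiently large constant $K$, and running the above argument on the stopped processes $A(\cdot\wedge\tau), B(\cdot\wedge\tau)$. On the stopped processes, $B(s\wedge\tau) = (1+o(1))a$ deterministically, so the variance bound holds uniformly; Freedman then gives $\Pr[\tau \le t'] = o(1)$, and hence the desired two-sided bound at the unstopped time $t'$. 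A minor technical point, easily handled, is that the one-step conditional mean constraint in (2) already makes $M(t+1)$ a genuine martingale difference with respect to the natural filtration $\mathcal{F}_t := \sigma(A(s),B(s): s\le t)$, so Freedman's inequality applies directly.
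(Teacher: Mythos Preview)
Your unrolling, decomposition, and Freedman/Bernstein step are all correct and indeed coincide with the paper's core computation. The gap is in how you close the circularity in the variance bound.

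The stopping-time bootstrap you propose presupposes that $B(t_0)$ already lies in the target window $[a - (\varepsilon a + K(\ldots)),\, a + (\varepsilon a + K(\ldots))]$. The theorem assumes nothing about $B(t_0)$ beyond $B(t_0)\in[0,n]$; if $B(t_0)$ is far from $a$ (say $B(t_0)=n$), then $|B(t_0+1)-a|$ is also far from the threshold (since $|B(t_0+1)-B(t_0)|\le\gamma$), so $\tau=t_0+1$ and the stopped process is trivially $B(\cdot\wedge\tau)\approx n$, not $(1+o(1))a$. Your claim ``on the stopped processes $B(s\wedge\tau)=(1+o(1))a$ deterministically'' is therefore false in general, and the argument collapses: with only the crude bound $B(s)\le n$ in condition~(4), the predictable quadratic variation is of order $(\delta+\eta)n/\lambda$, giving a Freedman deviation of order $\sqrt{n\log n}$ rather than $\sqrt{a\log n}$.

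Relatedly, your explanation of the $\log\log n$ factor is off: a single phase of length $\Theta((n/\lambda)\log n)$ already kills the initial-condition term $n(1-\lambda/n)^{T'}$ down to $n^{-\Theta(1)}$. The $\log\log n$ is needed precisely to resolve the variance circularity you flagged. The paper proves the theorem as follows: first a single-phase lemma shows that if $B(t)\le m$ throughout a window of length $\Theta((n/\lambda)\log n)$, then at its end $|B-a|\le \varepsilon a + \bigo(c_1\sqrt{m\log n}+\log n)$. Starting from the trivial bound $m_0=n$ and iterating gives $m_{k+1}\lesssim \max(c_1\sqrt{m_k\log n},\,a)$, so $m_k \le \max\big(((c_1C)^2\log n)^{1-2^{-k}} n^{2^{-k}},\, Ca\big)$; after $\ell=\Theta(\log\log n)$ phases, $m_\ell=\bigo(a)$ and one more application of the single-phase lemma finishes. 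This iterative shrinking is exactly the missing idea in your proposal; it is what forces $T'=\Theta((n/\lambda)\log n\cdot\log\log n)$.
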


The proof of this result is technical, and is deferred to the Appendix. 
To complete our exposition, notice that this result closely matches our set of previous derivations for $|R_{c + 1}|$, while relation (1) holds w.h.p. for the previous level $c$ as part of the induction step. 
More precisely, we can follow the above derivations and plug in $a = \tilde{R}_{c+1}$, and $\varepsilon = 2\xi_c$, $\lambda = 2$, $\delta = \eta  = 8$, and $\gamma = 1$, to obtain the following concentration result on the level counts after a sufficiently long time has passed. 

\newcommand{\onephaseonelevel}{Fix a level index $c < s$, an initial time $t_0$, and let $T = C  n \log n \log \log n$ where $C$ is large enough constant. 
Fix a constant $\xi_c < 1$ and assume that for any step $t \in [t_0, t_0 + T]$ it holds that 
the level count $R_c$ is always $\xi_c$-concentrated around $\tilde{R}_c$, that is 
 $\Big||R_c(t)| - \tilde{R}_c\Big| \le \xi_c \tilde{R}_c$.
Then there exists a constant $\xi_{c+1} = \xi_c + \bigo(\sqrt{\frac{\log n}{\tilde{R}_{c+1}}})$ such that, for any $t \ge t_0 + T$, with high probability, $\Big||R_{c+1}(t)| - \tilde{R}_{c+1}\Big| \le \xi_{c+1} \tilde{R}_{c+1}$.}
 \begin{lemma}
\label{lem:onephaseonelevel}
\onephaseonelevel
\end{lemma}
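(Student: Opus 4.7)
The plan is to derive Lemma 1 as a direct application of the concentration Theorem 1 (general\_martingale) to the stochastic processes already set up in the warm-up. Define
\[
B(t) := |R_{c+1}(t)|, \qquad A(t) := n\bigl[1 - (1 - r_c(t))^2\bigr], \qquad a := \tilde{R}_{c+1}.
\]
The drift identity $\mathbb{E}[B(t+1)\mid t] = (1 - 2/n)B(t) + (2/n)A(t)$ derived in the warm-up supplies condition~(2) of Theorem~1 with $\lambda = 2$; the variance computation there gives condition~(4) with $\delta = \eta = 8$; and since a single interaction changes $B$ by at most one unit, condition~(3) holds with $\gamma = 1$. The only nontrivial input is condition~(1), which must be fed from the inductive hypothesis on $R_c$.

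To verify (1), I would factor the difference through the polynomial $f(x) := x(2 - x)$, writing
\[
A(t) - \tilde{R}_{c+1} = n\bigl(f(r_c(t)) - f(\tilde{r}_c)\bigr) = n\bigl(r_c(t) - \tilde{r}_c\bigr)\bigl(2 - r_c(t) - \tilde{r}_c\bigr).
\]
Plugging in $|r_c(t) - \tilde{r}_c| \le \xi_c \tilde{r}_c$ (which holds for every $t \in [t_0, t_0+T]$ by assumption), the bookkeeping estimate $|2 - r_c - \tilde{r}_c| \le 2 - 2\tilde{r}_c + \xi_c \tilde{r}_c$, and the identity $\tilde{R}_{c+1} = n \tilde{r}_c (2 - \tilde{r}_c)$, a short simplification gives
\[
\frac{|A(t) - \tilde{R}_{c+1}|}{\tilde{R}_{c+1}} \;\le\; \xi_c \cdot \frac{2(1-\tilde{r}_c) + \xi_c \tilde{r}_c}{2 - \tilde{r}_c} \;\le\; \xi_c
\]
whenever $\xi_c \le 1$. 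This is condition~(1) of Theorem~1 with $\varepsilon = \xi_c$.

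With all four hypotheses verified, Theorem~1 yields an interval length $T' = \Theta(n \log n \log \log n)$ such that, for every $t' \ge t_0 + T'$, with high probability,
\[
|B(t') - \tilde{R}_{c+1}| \;\le\; \xi_c \tilde{R}_{c+1} + O\bigl(\sqrt{\tilde{R}_{c+1} \log n} + \log n\bigr).
\]
Choosing the constant $C$ in $T = C n \log n \log \log n$ large enough makes $T \ge T'$, so the bound applies to any $t \ge t_0 + T$. Dividing through by $\tilde{R}_{c+1}$, and absorbing the lower-order $\log n / \tilde{R}_{c+1}$ term into $\sqrt{\log n / \tilde{R}_{c+1}}$ in the regime $\tilde{R}_{c+1} = \Omega(\log n)$ (preserved by the recursion once the base counts are large enough), delivers the advertised $\xi_{c+1} = \xi_c + O(\sqrt{\log n / \tilde{R}_{c+1}})$.

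The main obstacle I anticipate is exactly the tight verification of condition~(1) with the sharp constant $\varepsilon = \xi_c$. A naive Lipschitz bound $|f(r_c) - f(\tilde{r}_c)| \le 2|r_c - \tilde{r}_c|$ would give only $\varepsilon = 2\xi_c$, doubling the error at every level and producing a fatal $2^s \xi_0$ blow-up across $s = \Theta(\log n)$ levels. The contraction-style factorization through $f'(x) = 2 - 2x$ — which shrinks as $\tilde{r}_c$ grows toward $1$ — is precisely what keeps the propagated error additive and compatible with the desired $\xi_{c+1} - \xi_c = O(\sqrt{\log n / \tilde{R}_{c+1}})$.
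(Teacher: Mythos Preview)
Your proposal is correct and follows exactly the paper's approach: apply Theorem~\ref{th:general_martingale} with $a = \tilde{R}_{c+1}$, $\lambda = 2$, $\delta = \eta = 8$, $\gamma = \bigo(1)$, using the warm-up's drift and variance identities verbatim. Your sharper verification of condition~(1) with $\varepsilon = \xi_c$ (rather than the paper's $\varepsilon = 2\xi_c$) is in fact what the paper's own intermediate estimate $|A(t)-\tilde R_{c+1}| \le n\,\xi_c\,\tilde r_c(2-\tilde r_c) = \xi_c \tilde R_{c+1}$ already delivers, and is exactly what is needed to get the additive recursion $\xi_{c+1} = \xi_c + \bigo(\sqrt{\log n/\tilde R_{c+1}})$ as stated in the Lemma.
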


Finally, we unroll the recursion for a fixed level $c$, and obtain that the following concentration bound should hold after a given point in time.
Note that level zero is always perfectly concentrated around $\tilde{R}_0$. 

\begin{corollary}
\label{cor:detection-bound}
	Given a level $c \geq 1$ and a fixed initial time $t_0$, there exists an absolute constant $\xi < 1$ and a time interval length $T = \Theta( c n \log n \log \log n )$ and  such that 
	for any $t \geq t_0 + T$, it holds with high probability that $|R_{c + 1}(t)| \in \Big[(1 - \xi)\tilde{R}_{c + 1}, (1 + \xi)\tilde{R}_{c + 1}\Big]$. 
\end{corollary}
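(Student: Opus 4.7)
}
The plan is to prove the result by induction on the level index, applying Lemma~\ref{lem:onephaseonelevel} one level at a time, and accumulating both the error parameter $\xi_i$ and the waiting time across the $c+1$ levels. Since level $0$ is fixed by the immutable counts $|X_0|$ and $|Y_0|$, we have $|R_0(t)| = \tilde{R}_0$ deterministically for all $t$, which gives us a trivial base case with $\xi_0 = 0$ and the concentration guarantee valid over any time window.

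For the inductive step, suppose the claim holds at level $i$ with concentration parameter $\xi_i$, valid for all $t \ge t_0 + i \cdot T$ where $T = C n \log n \log \log n$. Then on the window $[t_0 + i T, t_0 + (i+1) T]$ the hypothesis of Lemma~\ref{lem:onephaseonelevel} is satisfied (with high probability, by the inductive hypothesis), and so for $t \ge t_0 + (i+1) T$ we get concentration at level $i+1$ with parameter
$$\xi_{i+1} = \xi_i + \bigo\!\left(\sqrt{\tfrac{\log n}{\tilde{R}_{i+1}}}\right).$$
Unrolling this recursion yields $\xi_{c+1} = \bigo\!\left(\sum_{i=0}^{c} \sqrt{\log n / \tilde{R}_{i+1}}\right)$, and the total waiting time is $(c+1) T = \Theta(c n \log n \log \log n) = T_c$. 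A union bound over the at most $s = \bigo(\log n)$ levels preserves the high-probability guarantee.

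The main technical point, and the one that determines whether the final $\xi$ can be chosen as an absolute constant strictly less than $1$, is controlling the sum $\sum_{i \ge 0} \sqrt{\log n / \tilde{R}_{i+1}}$. Here we use the structure established in the earlier observation: $\tilde{R}_i = \Theta(n \cdot \min(1, 2^i \tilde{r}_0))$. Splitting the sum into the ``growing'' regime ($2^i \tilde{r}_0 \le 1$) and the ``saturated'' regime ($2^i \tilde{r}_0 > 1$), the first regime contributes a geometric series dominated by its first term $\bigo(\sqrt{\log n / \tilde{R}_1}) = \bigo(\sqrt{\log n / (|X_0|+|Y_0|)})$, while the saturated regime contributes at most $\bigo(s \sqrt{\log n / n}) = o(1)$. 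Using the assumption $|X_0|, |Y_0| = \Omega(\log n)$ from the statement of Theorem~\ref{th:mainresult-brief} (with a sufficiently large hidden constant), the first term is bounded by an arbitrarily small constant, so $\xi_{c+1}$ can be made smaller than any prescribed constant $\xi < 1$.

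The hard part is therefore not the inductive mechanism, which is essentially mechanical once Lemma~\ref{lem:onephaseonelevel} is in hand, but rather ensuring that the noise introduced by one level does not compound into the next. This is exactly what Theorem~\ref{th:general_martingale} buys us: it guarantees that the concentration at level $i+1$ inherits the concentration window at level $i$ additively (via the $\varepsilon a$ term) plus a sub-Gaussian deviation of order $\sqrt{\tilde{R}_{i+1} \log n}$, which is asymptotically subdominant to $\tilde{R}_{i+1}$ once $\tilde{R}_{i+1} = \Omega(\log n)$, i.e. from the very first level. Putting these pieces together yields the claimed $\xi$ and the stated $T_c$.
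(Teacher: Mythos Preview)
Your proposal is correct and follows exactly the approach the paper intends: the paper merely states ``we unroll the recursion'' from Lemma~\ref{lem:onephaseonelevel} with the base case $\xi_0=0$, and you have carried out precisely that unrolling, including the geometric control of $\sum_i \sqrt{\log n/\tilde R_{i+1}}$ that the paper later performs in the proof of Theorem~\ref{th:mainresult}. There is nothing to add.
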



\subsection{Step Two: The Comparison Process}
\label{sec:comparison-analysis}

We now proceed to analyze the core of our comparison algorithm. We leave aside the voting amplification component, which we analyze separately in Section~\ref{sec:precision}.  
The strategy is a more complex version of the one from the previous section: we derive bounds on the level counts of states $X_i$ and $Y_i$, for each state in turn. 
We will focus on the derivation for $X_i$, since the case of $Y_i$ is symmetric. 

Let $x_i = |X_i| / n$, and $y_i = |Y_i| / n$, for every level $i$. We begin by defining estimate values $\tilde{x}_i$ to which the level counts should concentrate in the steady-state: 
$$\tilde{x}_0 = x_0 = \frac{|X_0|}{n}\quad\quad\quad\quad\tilde{x}_{i+1} = \tilde{x}_i \cdot (2 - \tilde{r}_i - \tilde{r}_{i-1});$$
$$\tilde{y}_0 = y_0 = \frac{|Y_0|}{n}\quad\quad\quad\quad\tilde{y}_{i+1} = \tilde{y}_i \cdot (2 - \tilde{r}_i - \tilde{r}_{i-1}).$$
These values are computed by following the recursion suggested by steady-state analysis: for an agent to end up in state $X_{i+1}$, it needs to be either in state $X_i$ and be the first reagent in interaction with any of $X_{i},\ldots,X_s, Y_{i},\ldots,Y_s$, or the second reagent in interaction with any of $X_{i+1},\ldots,X_s, Y_{i+1},\ldots,Y_s$.
We unroll the recursion to obtain a well-informed guess as to the values around which these variables should concentrate. 

\begin{observation}
There is $\frac{\tilde{x}_0}{\tilde{x}_0 + \tilde{y}_0} = \frac{\tilde{x}_i}{\tilde{x}_i + \tilde{y}_i}$. It can be verified by induction that $\tilde{x}_i + \tilde{y}_i = \tilde{r}_i - \tilde{r}_{i-1}$.
\end{observation}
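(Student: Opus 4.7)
The plan is to prove both identities by exploiting the symmetry of the recurrences for $\tilde{x}_i$ and $\tilde{y}_i$, combined with the warm-up identity $\tilde{r}_{k+1} = \tilde{r}_k(2 - \tilde{r}_k)$. The first identity is immediate: since the recurrence multiplies $\tilde{x}_i$ and $\tilde{y}_i$ by the \emph{same} scalar factor $(2 - \tilde{r}_i - \tilde{r}_{i-1})$, the ratio $\tilde{x}_i / \tilde{y}_i$ is invariant in $i$, hence $\tilde{x}_i / (\tilde{x}_i + \tilde{y}_i) = \tilde{x}_0 / (\tilde{x}_0 + \tilde{y}_0)$ for every $i$ by a one-line induction.

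For the second identity $\tilde{x}_i + \tilde{y}_i = \tilde{r}_i - \tilde{r}_{i-1}$, I would proceed by induction on $i \ge 1$. The base case $i = 1$ is handled by direct steady-state analysis of the reactions involving $X_0$ and $Y_0$, yielding $\tilde{x}_1 + \tilde{y}_1 = \tilde{r}_0(1 - \tilde{r}_0) = \tilde{r}_1 - \tilde{r}_0$, where the last equality uses $\tilde{r}_1 = \tilde{r}_0(2 - \tilde{r}_0)$. For the inductive step, sum the two recurrences and substitute the hypothesis:
$$\tilde{x}_{i+1} + \tilde{y}_{i+1} = (\tilde{x}_i + \tilde{y}_i)(2 - \tilde{r}_i - \tilde{r}_{i-1}) = (\tilde{r}_i - \tilde{r}_{i-1})(2 - \tilde{r}_i - \tilde{r}_{i-1}).$$
Expanding, the cross terms $-\tilde{r}_i \tilde{r}_{i-1}$ and $+\tilde{r}_{i-1}\tilde{r}_i$ cancel, so the product simplifies to $\tilde{r}_i(2 - \tilde{r}_i) - \tilde{r}_{i-1}(2 - \tilde{r}_{i-1})$. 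Applying the warm-up identity $\tilde{r}_{k+1} = \tilde{r}_k(2 - \tilde{r}_k)$ at $k = i$ and $k = i-1$, this equals $\tilde{r}_{i+1} - \tilde{r}_i$, which closes the induction.

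The main obstacle is the bookkeeping at the base case $i = 1$: since $\tilde{r}_{-1}$ is not formally defined, one cannot blindly invoke the general recurrence at index $0$, and must instead verify $\tilde{x}_1 = \tilde{x}_0 (1 - \tilde{r}_0)$ directly from the steady-state propagation through $X_0$, in parallel with the derivation of $\tilde{r}_1 = 1 - (1 - \tilde{r}_0)^2$ in Section~\ref{sec:warmup}. Beyond this mild edge case, the remainder of the argument is routine algebra driven by the cancellation of the cross terms.
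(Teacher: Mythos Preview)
Your proposal is correct and matches exactly the approach the paper indicates: the paper states the observation without proof, merely noting ``It can be verified by induction,'' and your argument supplies precisely that induction. Your handling of the base case $i=1$ by direct steady-state analysis (rather than formally invoking the recurrence at $i=0$, where $\tilde{r}_{-1}$ is undefined) is the right call and resolves the only genuine subtlety; the inductive step is the intended algebraic cancellation.
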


The rest of this section will be dedicated to proving the following concentration result on the level counts. 

\begin{lemma}
\label{lem:xyonelevel}
Let $c < s$ be a level index and let $T = Cn \log n  \log \log n$ where $C$ is large enough constant. Assume that during all steps $t \in [T]$ it holds that 
 $\forall_{i \le s} \Big||R_i(t)| - \tilde{R}_i\Big| \le \xi \tilde{R}_i$ for $\xi < 1$ with $\xi$ defined as in Corollary~\ref{cor:detection-bound}, and that $|X_c(t) - \tilde{X}_c| \le \epsilon_c \tilde{X}_c$ for some $\epsilon_c < 1$.
Then, for $t \ge T$, there exists a value $\epsilon_{c+1} = \epsilon_c + 2 \varepsilon \frac{\tilde{r}_c}{1-\tilde{r}_c} + \bigo(\sqrt{\frac{\log n}{\tilde{X}_{c+1}}})$ such that, with high probability, it holds that 
$\Big||X_{c+1}(t)| - \tilde{X}_{c+1}\Big| \le \epsilon_{c+1} \tilde{X}_{c+1}.$
\end{lemma}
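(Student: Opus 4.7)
The plan is to mimic the warm-up analysis in Section~\ref{sec:warmup} but with $|X_{c+1}|$ in place of $|R_{c+1}|$; concretely, I will derive a drift-plus-noise recurrence that fits the hypotheses of Theorem~\ref{th:general_martingale} and then invoke that theorem.

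\textbf{Step 1 (expected drift).} First I will enumerate the reactions that affect $|X_{c+1}|$. For $1 \le c < s$, a gain of $+2$ occurs whenever the pair contains an $X_c$-agent together with a partner at level $\ge c$ (an $X_j$ or $Y_j$ with $j \ge c$, or $N$), in which the ``smaller-index-$X$-wins'' convention produces $X_{c+1} + X_{c+1}$; a loss occurs whenever an $X_{c+1}$-agent is picked with any partner other than $X_c$ (in which case it is promoted to $X_{c+2}/N$, demoted to $X_1$, or overridden to a $Y$-level). Summing the contributions should give
\begin{equation*}
\E\bigl[|X_{c+1}|(t+1)\,\big|\,t\bigr] = \left(1-\tfrac{2}{n}\right)|X_{c+1}|(t) + \tfrac{2}{n}A(t), \qquad A(t) := |X_c|(t)\bigl(2 - r_c(t) - r_{c-1}(t)\bigr),
\end{equation*}
which at equilibrium reproduces the paper's definition $\tilde{X}_{c+1} = \tilde{X}_c(2-\tilde{r}_c - \tilde{r}_{c-1})$.

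\textbf{Step 2 (variance).} Since each step changes $|X_{c+1}|$ by at most $2$ and changes at all only with probability $\bigo((|X_c| + |X_{c+1}|)/n)$, a direct bound gives
\begin{equation*}
\Var\bigl[|X_{c+1}|(t+1)\,\big|\,t\bigr] = \bigo\!\left(\frac{|X_{c+1}|(t)}{n} + \frac{A(t)}{n}\right),
\end{equation*}
matching condition~(4) of the theorem.

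\textbf{Step 3 (noise bound on $A$).} I will decompose
\begin{equation*}
A(t) - \tilde{X}_{c+1} = (|X_c|(t) - \tilde{X}_c)(2 - r_c(t) - r_{c-1}(t)) + \tilde{X}_c\bigl[(\tilde{r}_c - r_c(t)) + (\tilde{r}_{c-1} - r_{c-1}(t))\bigr],
\end{equation*}
plug in the hypotheses $|X_c(t) - \tilde{X}_c| \le \epsilon_c\tilde{X}_c$ and $|R_i(t) - \tilde{R}_i|\le\xi\tilde{R}_i$ for $i\in\{c-1,c\}$, and exploit the denominator bound $\tilde{X}_{c+1} = \tilde{X}_c(2 - \tilde{r}_c - \tilde{r}_{c-1}) \ge \tilde{X}_c(1-\tilde{r}_c)$ (together with $\tilde{r}_c + \tilde{r}_{c-1} \le 2\tilde{r}_c$) to arrive at
\begin{equation*}
|A(t) - \tilde{X}_{c+1}| \le \left(\epsilon_c + 2\xi\,\frac{\tilde{r}_c}{1-\tilde{r}_c}\right)\tilde{X}_{c+1}\cdot(1+o(1)).
\end{equation*}

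\textbf{Step 4 (invoke the concentration theorem).} Steps~1--3 verify the four hypotheses of Theorem~\ref{th:general_martingale} with $a = \tilde{X}_{c+1}$, $\lambda = 2$, $\gamma, \delta, \eta = \bigo(1)$, and $\varepsilon = \epsilon_c + 2\xi\tilde{r}_c/(1-\tilde{r}_c)$. The theorem then guarantees, for any $t' \ge t_0 + \Theta(n\log n\log\log n)$ and with high probability,
\begin{equation*}
\bigl||X_{c+1}|(t') - \tilde{X}_{c+1}\bigr| \le \varepsilon\tilde{X}_{c+1} + \bigo\bigl(\sqrt{\tilde{X}_{c+1}\log n} + \log n\bigr),
\end{equation*}
and dividing by $\tilde{X}_{c+1} = \Omega(\log n)$ produces exactly $\epsilon_{c+1} = \epsilon_c + 2\xi\tilde{r}_c/(1-\tilde{r}_c) + \bigo(\sqrt{\log n/\tilde{X}_{c+1}})$, as claimed.

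\textbf{Main obstacle.} The technical crux is Step~3: the algebra must juggle three simultaneous deviations ($X_c$, $R_c$, $R_{c-1}$), and the clean coefficient $\tilde{r}_c/(1-\tilde{r}_c)$ emerges only after carefully exploiting the denominator bound on $\tilde{\alpha} := 2 - \tilde{r}_c - \tilde{r}_{c-1}$. A secondary subtlety is the base case $c = 0$: since $X_0$ is immutable and (per the rule list) does not react with $N$, the expression for $A(t)$ differs slightly at level zero, but the exactness of $|X_0|$ gives the trivial base $\epsilon_0 = 0$ and an entirely analogous computation handles that case.
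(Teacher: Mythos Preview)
Your proposal is correct and follows essentially the same approach as the paper's proof: derive the drift recurrence with $A'(t) = |X_c|(t)\bigl(2-r_c(t)-r_{c-1}(t)\bigr)$, bound the conditional variance by $\bigo\bigl(|X_{c+1}(t)|/n + A'(t)/n\bigr)$, bound $|A'(t)-\tilde{X}_{c+1}|$, and invoke Theorem~\ref{th:general_martingale} with $\lambda=2$. The only cosmetic difference is the triangle-inequality split in Step~3---the paper adds and subtracts $x_c(t)(2-\tilde{r}_c-\tilde{r}_{c-1})$ (freezing the $r$-terms first) while you add and subtract $\tilde{X}_c(2-r_c(t)-r_{c-1}(t))$ (freezing $X_c$ first)---but both routes produce the same coefficient $\epsilon_c + 2\xi\,\tilde{r}_c/(1-\tilde{r}_c)$ after absorbing the $\epsilon_c\xi$ cross term.
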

\begin{proof}
Fix a level index $c \geq 0$ and time $t$, such that, at this time, the level counts $X_i$ at levels $i \leq c$ are well-concentrated around their means $\tilde{X}_i := n \tilde{x}_i$, with high probability. 
We show that there exists a time $t' \geq t$, such that, with high probability, the level count at level $c + 1$ is concentrated around its own predicted mean $\tilde{X}_{c + 1}$. 
Fix a level $c$ and a time $t$, and assume that there exists a constant $\epsilon_c > 0$ such that  $\Big||X_c(t)| - \tilde{X}_c\Big| \le \epsilon_c \tilde{X}_c$, with high probability.  
We will proceed to prove that  there exists a constant $\epsilon_{c + 1}$ and a time $t'\geq t$  such that $\Big||X_{c + 1}(t')| - \tilde{X}_{c + 1}\Big| \le \epsilon_{c+1} \tilde{X}_{c + 1}$ given a sufficiently large time interval $T = t' - t$. 

The argument will begin by analyzing the evolution of the level counts at time $t+1$.
We define  $\Delta'_1(t),\Delta'_2(t),\Delta'_3(t),\Delta'_4(t) \in \{0,1\}$ as indicator variables for the following events at step $t$:
\begin{itemize}
\item $\Delta'_1(t) = 1$ iff the first reacting agent was from $X_{c+1}$;
\item $\Delta'_2(t) = 1$ iff the second reacting agent was from $X_{c+1}$;
\item $\Delta'_3(t) = 1$ iff the first reacting agent was from $X_{c}$ and second reacting agent had a level $> c$, 
or the first reacting agent had a level $\geq c - 1$ and the second reacting agent was from $X_c$;
\item $\Delta'_4(t) = 1$ iff the first reacting agent had level $> c$ and the second reacting agent is from $X_c$, or if the first reacting agent is from $X_c$, and the second reacting agent has level $\geq c$. 
\end{itemize}

Notice that these events cover all the cases where the count of $X_{c + 1}$ might change in this step. 
As before, the plan is to set up the usage of the Concentration Theorem for the random variable $|X_{c + 1}|$. 
For this, we will characterize its mean and variance at step $t + 1$, assuming that the counts at the previous levels are well-behaved, which we can safely assume by the induction step.
By careful calculation, we obtain:

\begin{align*}
\E\Big[|X_{c+1}(t+1)|\ |\ t\Big] = \left(1-\frac2n\right) |X_{c+1}(t)| + \frac 2n A'(t),
\end{align*}

\noindent where we defined $A'(t) = n \cdot  x_c(t)[2-r_{c}(t)-r_{c-1}(t)]$. Further, we have:
\begin{align*}
\mathrm{Var}\Big[|X_{c+1}(t+1)|\ |\  t\Big] \leq  8 \frac{|X_{c+1}(t)|}{n} + 8 \frac{A'(t)}{n}.
\end{align*}

\noindent Another careful upper bound argument yields that
\begin{align*}
|A'(t) - \tilde{X}_{c+1}| 
\le \left( 2 \varepsilon \frac{\tilde{r}_c}{1-\tilde{r}_c} + \varepsilon_c\right) \tilde{X}_{c+1}.
\end{align*}

\noindent At this point, we have enough data to invoke Theorem~\ref{th:general_martingale}, which guarantees that after $T = \bigo(n \log n \log \log n)$ steps we have
$$\Big||X_{c+1}(T)| - \tilde{X}_{c+1}\Big| \le\left( 2 \epsilon \frac{\tilde{r}_c(t)}{1-\tilde{r}_c(t)} + \epsilon_c\right)  \tilde{X}_{c+1} + \bigo( \sqrt{ \tilde{X}_{c+1} \log n}).$$
\end{proof}

We can then iterate this result to obtain the separation result for the proportion of agents supporting either opinion:

\begin{theorem}
\label{th:mainresult}
Let $T = C n \log^2 n \log \log n$ where $C$ is large enough constant. 
Assume that $|R_0| = \bigo\left( \frac{n}{(\log \log n)^2} \right)$ and $|R_0| = \Omega(\log n)$.  
For appropriately chosen constants $C_1, C_2 > \frac12$, if $|X_0| \ge C_1 (|X_0| + |Y_0|)$, then the total count of the population of agents of opinion ``X'', formally 
$P_X = \sum_{i=1}^s X_i$, will satisfy $P_X > C_2 n$, with high probability. 
\end{theorem}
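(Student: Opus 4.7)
The plan is to combine the per-level concentration results proved earlier in the section (Corollary~\ref{cor:detection-bound} for the aggregate counts $R_c$, Lemma~\ref{lem:xyonelevel} for the opinion-specific $X_c$) across all $s = \log n + \Theta(\log\log n)$ levels, and then translate them into a single bound on $P_X = \sum_{c=1}^s X_c$. Since the identity $\tilde x_c/(\tilde x_c + \tilde y_c) = x_0/r_0 \ge C_1$ holds at every level, and the choice of $s$ ensures $\tilde r_s = 1 - o(1)$, the steady-state target $\tilde P_X := \sum_c \tilde X_c = \frac{|X_0|}{|R_0|}(\tilde r_s - \tilde r_0) n$ satisfies $\tilde P_X \ge C_1(1-o(1)) n$. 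The theorem then reduces to showing $|P_X - \tilde P_X| \le (C_1 - C_2) n$ with high probability, where the slack $C_1 - C_2 > 0$ has to absorb the deviation.

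In the first step, I would invoke Corollary~\ref{cor:detection-bound} simultaneously on all $s = \Theta(\log n)$ levels via a union bound. Each level requires $\Theta(n \log n \log\log n)$ interactions to stabilize, so the total budget $T = Cn\log^2 n \log\log n$ is sufficient. This yields, at any time $t \ge T$, the simultaneous concentration $|R_c(t) - \tilde R_c| \le \xi \tilde R_c$ for all $c \le s$, which is exactly the premise of Lemma~\ref{lem:xyonelevel}. In the second step, I would apply Lemma~\ref{lem:xyonelevel} inductively on $c$, starting from $\epsilon_0 = 0$ (the input count $|X_0|$ is fixed), using an additional union bound across the $s$ levels. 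Unrolling the recurrence gives, for every $c$,
\[
\epsilon_c \le 2\xi \sum_{j < c} \frac{\tilde r_j}{1 - \tilde r_j} + \bigo\!\left(\sum_{j \le c}\sqrt{\frac{\log n}{\tilde X_j}}\right).
\]

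The third step---which I expect to be the \emph{main obstacle}---is to convert these per-level bounds into the global bound $|P_X - \tilde P_X| \le \sum_c \epsilon_c \tilde X_c \le (C_1 - C_2) n$. A naive triangle inequality is too wasteful: the factor $\tilde r_j/(1-\tilde r_j)$ diverges doubly-exponentially as $\tilde r_j \to 1$, and there can be $\Omega(\log n)$ saturated levels where $\tilde r_c = 1-o(1)$. The crucial algebraic identity that saves the argument is $\tilde X_{c+1} = C_1 n \cdot \tilde r_c (1-\tilde r_c)$: swapping the order of summation converts $\sum_c \epsilon_c \tilde X_c$ into a sum of products $\frac{\tilde r_j}{1-\tilde r_j}\cdot \sum_{c > j} \tilde X_c$, and the tail sum $\sum_{c > j}\tilde X_c \le C_1 n(1 - \tilde r_j)$ cancels the divergent denominator, leaving a bounded per-level contribution of order $\tilde r_j$. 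The two-sided hypothesis on $|R_0|$ then works in tandem to control the resulting sum: the lower bound $|R_0| = \Omega(\log n)$ controls the noise term $\sum_j \sqrt{\log n/\tilde X_j}$ via the geometric growth of $\tilde X_c$, while the upper bound $|R_0| = \bigo(n/(\log\log n)^2)$ is what tightens the concentration parameter $\xi$ from Corollary~\ref{cor:detection-bound} and keeps the overall deviation linear in $n$ with a sufficiently small constant. Choosing $C_1$ close enough to $1$ (and $C_2$ slightly smaller) so that $C_1 - C_2$ strictly exceeds the resulting error then yields $P_X > C_2 n$ with high probability.
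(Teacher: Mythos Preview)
Your high-level structure---invoke Corollary~\ref{cor:detection-bound} to control all $R_c$'s, then iterate Lemma~\ref{lem:xyonelevel} across levels, then sum---is exactly what the paper does. The gap is in the third step: you try to carry the per-level bounds all the way to level $s$, whereas the paper \emph{truncates} at the first level $d$ with $\tilde r_d \ge 0.9$ (so $\tilde r_d \in [0.9,0.99]$ and $d = \log_2(1/\tilde r_0)+O(1)$). That truncation is the whole trick. Over levels $j\le d$ one has $1-\tilde r_j \ge 0.01$, so $\sum_{j\le d}\frac{\tilde r_j}{1-\tilde r_j}\le 100\sum_{j\le d}\tilde r_j = O(1)$ by the geometric growth of $\tilde r_j$, and hence the accumulated error $\varepsilon'$ stays an arbitrarily small constant; on the other hand $\sum_{c\le d}\tilde X_c = C_1 n\,\tilde r_d \ge 0.9\,C_1 n$ already, so the discarded levels cost nothing. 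No summation swap is needed.

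Extending to level $s$ as you propose does not work, for two reasons. First, Lemma~\ref{lem:xyonelevel} requires $\epsilon_c<1$ as a hypothesis; once $c$ passes $d$ by $O(1)$ the term $2\xi\,\tilde r_c/(1-\tilde r_c)$ explodes doubly-exponentially and the inductive chain breaks. Second, even if one grants the per-level bounds formally, your swap leaves $2\xi C_1 n\sum_{j<s}\tilde r_j$, and $\sum_{j<s}\tilde r_j$ is not $O(1)$: there are $s-d = \log_2|R_0|+\Theta(\log\log n)$ saturated levels each contributing $\approx 1$, so the sum is $\Omega(\log\log n)$ (and can be $\Theta(\log n)$ for large $|R_0|$). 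This cannot be absorbed into $(C_1-C_2)n$ with a constant $\xi$. Finally, your attribution of roles is backwards: $\xi$ in Corollary~\ref{cor:detection-bound} comes from iterating $\xi_{c+1}=\xi_c+O(\sqrt{\log n/\tilde R_{c+1}})$ in Lemma~\ref{lem:onephaseonelevel}, so it is tightened by the \emph{lower} bound $|R_0|=\Omega(\log n)$, not by the upper bound $|R_0|=O(n/(\log\log n)^2)$.
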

\begin{proof}
Consider the minimal parameter $d$ such that $\tilde{R}_{d} \ge 0.9 n$. For this value, it will hold that $\tilde{R}_{d} \le 0.99 n$ and $d = \log_2 n + \Theta(1)$. 
By Corollary~\ref{cor:detection-bound}, after a time interval of length  $T_1 = \Theta(n \log^2 n \log \log n)$ all values $|R_i|$ satisfy $|R_i| = (1 \pm \varepsilon) \tilde{R}_i$ for $\varepsilon$ a constant that can be made arbitrarily close to 0 (the cost is traded off against the constant hidden in $|R_0| = \Omega(\log n)$). 
After that time, we repeatedly apply Lemma~\ref{lem:xyonelevel} for the first $d'$ levels of $X_i$. The guarantee for opinions $X$ is that $|X_i| = (1 \pm \varepsilon') \tilde{X}_i$, where
$$\varepsilon' = \sum_{i \le d'} 2\varepsilon \frac{\tilde{r}_i}{1-\tilde{r}_i} + \bigo\left(\sum_{i \le d'} \sqrt{\frac{\log n}{\tilde{X}_i}}\right).$$
We note that, by the geometric sum progression (since only constant number of terms satisfy $0.5 n \le \tilde{r}_i \le 0.9 n$:
$$\sum_{i \le d} 2\varepsilon \frac{\tilde{r}_i}{1-\tilde{r}_i}   \le \sum_{i \le d} 100 \varepsilon \tilde{r}_i = \Theta(\varepsilon) $$
and since $\forall_{i \le d}$ we have that  $\tilde{X}_i = \Omega( \log n)$, the second term is also an arbitrarily small constant, we have that $\varepsilon'$ is also constant that can be arbitrary small. We then observe that $P \ge \sum_{i \le d} |X_i| \ge (1-\varepsilon') \sum_{i \le d} \tilde{X}_i = (1-\varepsilon') C_1 \tilde{R}_i \ge (1-\varepsilon') C_1 0.9 n$, and since $C_1$ can be chosen to be large  and $\varepsilon'$ to be small, this is at least $C_2 n$ for some $C_2 > \frac12$.
\end{proof}

%
\subsection{Bootstrapping convergence time}
\label{sec:bootstr}
We now show how to bootstrap on the results in the previous section, and prove convergence within $\mathcal{O}(n \log n)$ interactions, shaving off the additional logarithmic factors.
We employ a generic technique which leverages that: (i) each of the processes we analyzed mixes fast and (ii) the effect of many sources can be separated and analyzed separately. 
As a result, we can show that the overall process mixes fast (the  $\bigo(n \log n)$ interactions is as fast as mixing). 
But first, we need to rephrase two technical results from \cite{alistarh2017robust}, adapting them to bi-chromatic setting, where we have two possible initial states.

First, for an agent $u$ of a type $X_i$ or $Y_i$ we denote $\textsf{level}(u) = i$, and if $u$ is in a state $N$ then we define $\textsf{level}(u) = \infty$. We also talk about a \emph{color} of a type of $u$, denoted $\textsf{color}(u)$, being either $X$ or $Y$. If $u$ is of type $N$, we will assign it one of a colors $X$ or $Y$ arbitrarily.
The first is the following Lemma, which adapts a folklore result from load balancing theory to our setting.

\begin{lemma}[\cite{alistarh2017robust}]
\label{lem:resettime}
For any integer $s > 0$, there is constant $C$, such that if there are no agents in state $X$ or $Y$, after $C n \log n$ interactions with high probability $1-n^{-c}$ there is no agent $u$ with $\textsf{level}(u) < s$. The constant $C$ depends only on $s$ and $c$.
\end{lemma}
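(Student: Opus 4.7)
The plan is to argue that, in the absence of source agents, the global minimum level $L_t := \min_u \textsf{level}(u)$ is non-decreasing and, with high probability, increases by at least $1$ within every window of $\bigo(n \log n)$ interactions. Iterating $s$ times then yields the stated bound, with a total constant $C$ depending only on $s$ and $c$.

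First I would extract a structural observation from the reaction rules: whenever two agents of finite levels $i \le j$ interact, both outputs have level $i+1$ (or both become $N$ in the borderline case $i = j = s$); and reactions of the form $X_i + N$ or $Y_i + N$ either produce two level-$(i+1)$ agents (when $i < s$) or two $N$ agents (when $i = s$). Consequently, the outputs of any reaction have level at least $L_t + 1$, so $L_t$ is non-decreasing. A crucial by-product is that once $L_t \ge \ell$, no agent of level $< \ell$ is ever re-created; in particular, even a level-$\infty$ agent that ``drops'' back to a finite level never drops below $L_t + 1$.

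Next, for each fixed $\ell < s$, I would bound the time to push $L_t$ from $\ell$ to $\ell+1$. Let $k_t$ be the number of agents at level exactly $\ell$. By the observation above, during this phase $k_t$ is non-increasing, and a direct count over the two uniformly sampled agents gives
\[
\mathbb{E}[k_t - k_{t+1} \mid k_t] \;=\; 2\cdot\frac{k_t(k_t-1)}{n(n-1)} \;+\; 1\cdot\frac{2 k_t (n-k_t)}{n(n-1)} \;=\; \frac{2 k_t}{n}.
\]
Thus $\mathbb{E}[k_{t+T}] \le k_t (1 - 2/n)^T \le n\, e^{-2T/n}$, and choosing $T = C_0\, n \log n$ with $C_0 = C_0(s,c)$ large enough makes this smaller than $n^{-(c + \log s + 1)}$. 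Markov's inequality then gives $k_{t+T} = 0$ except with probability at most $n^{-(c + \log s + 1)}$.

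Finally, a union bound over the at most $s$ successive phases $\ell = 1, 2, \ldots, s-1$ shows that after $s \cdot C_0\, n \log n = C\, n \log n$ interactions, the event $L_t \ge s$ holds except with probability at most $s \cdot n^{-(c + \log s + 1)} \le n^{-c}$, as desired. The main obstacle is the structural observation itself: one has to enumerate the rule types and verify that no rule produces an output of level strictly less than $\min(\text{input levels}) + 1$, handling the $N$-involving rules separately, so that $k_t$ is genuinely monotone within a phase. Once that is in place, the rest is a clean supermartingale/coupon-collector computation.
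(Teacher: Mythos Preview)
Your argument is correct. The paper does not give its own proof of Lemma~\ref{lem:resettime} (it is imported from~\cite{alistarh2017time} as a folklore load-balancing fact). The closest in-paper argument is the proof of the analogous statement for the randomized-transition variant in Section~\ref{sec:trade-off}, and that proof takes a genuinely different, one-shot route: it defines an exponential potential $\Phi_t = \sum_u d^{-\textsf{level}_t(u)}$, verifies that any single interaction contracts the two participants' combined potential by a fixed factor so that $\mathbb{E}[\Phi_{t+1}\mid t] \le (1-\Theta(1)/n)\,\Phi_t$, and then applies Markov's inequality once at time $T$ to force $\Phi_T < d^{-(s-1)}$, which rules out any agent of level $<s$.

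Your phase-by-phase decomposition (drain the current minimum level via the supermartingale $k_t$, then advance) is more elementary and requires no guess of a potential; the structural monotonicity observation you isolate is exactly what makes $k_t$ non-increasing within a phase. The potential-function route avoids the union bound over $s$ phases and packages the $s$-dependence into the single threshold $d^{-s}$, which incidentally yields a slightly sharper constant ($T=\Theta(n\log n + s\,n)$ rather than your $\Theta(s\,n\log n)$). For the lemma as stated, $s$ is a fixed constant, so both approaches give the claimed $\Theta(n\log n)$ bound with $C=C(s,c)$.
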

This Lemma effectively states that system with no $X$ source and no $Y$ source quickly converges to all-$N$ configuration. The following technical tools allow us to use this statement in more complex configurations.

\begin{definition}
Consider two separate populations, $\{u_1,\ldots,u_n\}$ and $\{v_1,\ldots,v_n\}$ each on $n$ agents. We say that the populations are \textbf{coupled}, if in an evolution, after each step $t$,  in each population the
corresponding molecules interact (i.e. the interaction is $u_i + u_j$ and $v_i + v_j$).
\end{definition}

\noindent We now state the following. 
\begin{observation}[c.f. \cite{alistarh2017robust}]
\label{obs:coupling}
Consider $3$ populations $\{u_i\}$, $\{v_i\}$ and $\{w_i\}$ of identical sizes.
For each of the following properties, if it is satisfied after $t$ steps and the corresponding populations are coupled, then it is satisfied indefinitely after each following interaction:
$$\forall_i \textsf{level}(u_i) = \min(\textsf{level}(v_i),\textsf{level}(w_i)).$$  
If initially it holds that $\textsf{level}(u_i) = \textsf{level}(v_i)$ then $\textsf{color}(u_i) = \textsf{color}(v_i)$, and whenever $\textsf{level}(u_i) = \textsf{level}(w_i)$ then $\textsf{color}(u_i) = \textsf{color}(w_i)$, then this property holds indefnitely as well.
\end{observation}

We show the convergence properties in a two-step proof. First, we show that it converges from ``nice'' starting configurations -- where we use coupling argument to compare evolution of freshly started population with already mixed population, for $\bigo(n \log n)$ steps. Second, we bootstrap the argument to all starting populations, by splitting the ``not-nice'' initial conditions away from $X_0$ and $Y_0$ states, using once again coupling argument.

\paragraph{Converging from ``nice'' configurations.} 
Our first technical result will show via a coupling argument that the process converges fast from configurations where all agents are in state $X_0, Y_0$, or $N$.

\begin{lemma}
\label{lem:reset_of_all-n}
Consider a population that is initialized with all agents in states $X_0$, $Y_0$ or $N$, with $X_0$ and $Y_0$ satisfying requirements of Theorem~\ref{th:mainresult}. Then the population will satisfy the guarantees from Theorem~\ref{th:mainresult} after $\bigo(n \log n)$ steps.
\end{lemma}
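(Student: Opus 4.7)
The plan is a coupling argument that leverages fast mixing of level information once sources are present, allowing us to transfer the guarantees of Theorem~\ref{th:mainresult} (which hold after $\Theta(n \log^2 n \log \log n)$ steps from \emph{arbitrary} configurations) to a speedup for the nice starting configuration considered here. I would introduce a reference population $P^{*}$ that has been running since time $-T$ with $T = \Theta(n \log^2 n \log \log n)$ from some (arbitrary) initial state but with the same immutable sources $X_0, Y_0$. By Theorem~\ref{th:mainresult}, $P^{*}$ already satisfies the desired guarantees at every time $t \geq 0$, with high probability. I would then couple $P$ (the fresh population) and $P^{*}$ so that from time $0$ onward the same ordered pair of agents interacts in both. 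Since the sources are identical, the two populations agree at source positions at all times; the only discrepancy at time $0$ is at regular agents, where $P$ has state $N$ while $P^{*}$ has assorted steady-state values.

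The core technical step is to show that this discrepancy ``washes out'' in $\bigo(n \log n)$ steps. To this end, I would trace, for every agent $i$ and time $t' = C n \log n$, the backward influence chain of $i$: the sequence of causal interactions that determines $i$'s state at time $t'$. This chain has depth bounded by the level parameter $s$, and every time it reaches a source it \emph{stops}, because the source's state ($X_0$ or $Y_0$) is identical in $P$ and in $P^{*}$. Hence $P_i(t') = P^{*}_i(t')$ whenever every leaf of $i$'s backward chain is a source. The event that some leaf instead reaches back to the time-$0$ initial configuration is precisely the event that a ``reset'' failed to occur within $[0, t']$; this is the dual of the situation analyzed in Lemma~\ref{lem:resettime}, and an analogous argument (which I would formalize by coupling with a third ghost population $G$ that has no sources, for which Lemma~\ref{lem:resettime} directly guarantees all levels exceed $s$ within $Cn\log n$ steps w.h.p.) gives a failure probability of $n^{-c}$ for each agent.

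Taking a union bound over the $n$ agents, with high probability \emph{every} agent's state in $P$ at time $t'$ coincides with its state in $P^{*}$, and therefore $P$ inherits the guarantees of Theorem~\ref{th:mainresult} after $t' = \bigo(n \log n)$ steps. To turn the intuition about backward chains into a clean proof I would formalize it via Observation~\ref{obs:coupling}: argue that the levels (and colors) in $P$ can be written as the minimum of the levels in $P^{*}$ and in the source-free ghost $G$ by setting up the coupling so that the required min/color property already holds at time $0$, which in turn forces $P_i(t') = P^{*}_i(t')$ once all levels in $G$ exceed $s$.

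The main obstacle is precisely this last step: concluding equality between $P$ and $P^{*}$ via the observation requires checking that the initial min/color relation is satisfied, which is not automatic because $P^{*}$ at time $0$ may contain regular agents at finite levels while $P$ has them at $\infty$. I expect to have to carefully choose (or equivalently carefully construct) the initial configuration of $P^{*}$ to match the hypothesis of Observation~\ref{obs:coupling}(3), or else argue directly by following the causal-chain/reset-event argument without routing through the observation, using Lemma~\ref{lem:resettime} to bound the probability that an agent's state at time $t'$ still depends on the time-$0$ discrepancy.
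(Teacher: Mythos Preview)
Your overall strategy matches the paper's proof exactly: three coupled populations (the fresh $P$, a long-running reference $P^{*}$, and a source-free ghost $G$), Observation~\ref{obs:coupling} to propagate a min-level relation, Lemma~\ref{lem:resettime} to kill $G$ in $\bigo(n\log n)$ steps, and then read off $P=P^{*}$.

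The obstacle you identify is an artifact of writing the min relation in the wrong direction. The relation that actually holds at time $0$ (and is preserved by Observation~\ref{obs:coupling}(3)) is
\[
\textsf{level}(P^{*}_i)=\min\bigl(\textsf{level}(P_i),\textsf{level}(G_i)\bigr),
\]
with the \emph{reference} $P^{*}$ playing the role of $u$ (the minimum), not $P$. For this you must initialize $G$ specifically as the time-$0$ snapshot of $P^{*}$ with its sources replaced by $N$, rather than an arbitrary source-free population. Then at source positions $\min(0,\infty)=0=\textsf{level}(P^{*}_i)$, at non-source positions $\min(\infty,\textsf{level}(P^{*}_i))=\textsf{level}(P^{*}_i)$, and colors agree trivially; so condition~(3) holds exactly at time~$0$ with no further work. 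Once $G$ becomes all-$N$, the min collapses to $\textsf{level}(P^{*}_i)=\textsf{level}(P_i)$ with matching colors, hence $P=P^{*}$. Neither of your proposed workarounds (tuning $P^{*}$'s initial configuration, or a direct causal-chain argument bypassing the observation) is needed; the role swap together with this specific construction of $G$ is the whole trick, and this is precisely how the paper proceeds.
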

\begin{proof}
Denote our population as $\{v_i\}$. Denote the number of steps from Lemma~\ref{lem:resettime} as $T_{\textrm{reset}} = \bigo(n \log n)$ and number of steps from Theorem~\ref{th:mainresult} as $T_{\textrm{slow}} = \bigo(n \log^2 n \log \log n)$.

Consider population $\{u_i\}$ initialized identically as population $\{v_i\}$.
We let population $\{u_i\}$ evolve for $T' = T_{\textrm{reset}} + T_{\textrm{slow}}$ steps, starting at step 0. After $T_{\textrm{slow}}$ steps, we take its state (denote it as $S$), and we construct population population $\{w_i\}$, in which for every $u_i \not= X_0,Y_0$ corresponding $w_i$ is set to the identical state, and for $u_i = X_0,Y_0$ we set $w_i$ to be $N$. We set populations $\{v_i\}$ and $\{w_i\}$ to start its evolution at step $T_{\textrm{slow}}$ and make the evolution of populations $\{u_i\}, \{v_i\}$ and $\{w_i\}$ coupled over steps $[T_{\textrm{slow}},T_{\textrm{slow}}+T_{\textrm{reset}}]$. 
Since those populations follow conditions of Observation~\ref{obs:coupling} at 
 $T_{\textrm{slow}}$, the same holds after step $T_{\textrm{slow}}+T_{\textrm{reset}}$. Since by Lemma~\ref{lem:resettime}, after step $T_{\textrm{slow}}+T_{\textrm{reset}}$ $\{w_i\}$ is all $N$ with high probability, conditioned on this high probability event populations $\{u_i\}$ and $\{v_i\}$ are in identical configurations. By  Theorem~\ref{th:mainresult},  since $T' > T_{\textrm{slow}}$, population $\{u_i\}$ reached configuration that satisfies desired bounds. Thus we conclude that population $\{v_i\}$ reached desired bounds at step $T_{\textrm{reset}} = \bigo(n \log n)$.
\end{proof}

\paragraph{Converging from all configurations.} 
Next, we provide the general coupling argument that the process converges fast from configurations where agents are initially in arbitrary state. 

\begin{lemma}
Consider a population that is initialized arbitrarily, with $X_0$ and $Y_0$ satisfying requirements of Theorem~\ref{th:mainresult}. Then the population will satisfy the guarantees from Theorem~\ref{th:mainresult} after $\bigo(n \log n)$ steps.
\end{lemma}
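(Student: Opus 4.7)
The plan is to reduce to Lemma~\ref{lem:reset_of_all-n} via a coupling that separates the source content of the initial configuration from the arbitrary ``junk'' content, so that each part can be handled by an already-proved result.

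Let $\{v_i\}$ denote the arbitrary initial population. Construct two auxiliary populations $\{a_i\}$ and $\{b_i\}$ by setting, for each index $i$, $a_i = v_i$ and $b_i = N$ when $v_i \in \{X_0, Y_0\}$, and $a_i = N$, $b_i = v_i$ otherwise. A brief case analysis verifies that at step $0$ we have $\textsf{level}(v_i) = \min(\textsf{level}(a_i), \textsf{level}(b_i))$, and that $\textsf{color}(v_i)$ agrees with $\textsf{color}(a_i)$ whenever the minimum is attained by $a_i$, and with $\textsf{color}(b_i)$ whenever it is attained by $b_i$. Couple all three populations from step $0$ onward; then property $3$ of Observation~\ref{obs:coupling} ensures that these level and color relations persist at every subsequent step.

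Next, invoke Lemma~\ref{lem:reset_of_all-n} on $\{a_i\}$: since it contains only agents in $\{X_0, Y_0, N\}$ and the source counts of $\{a_i\}$ equal those of $\{v_i\}$, after $T_1 = \bigo(n \log n)$ steps $\{a_i\}$ satisfies the conclusion of Theorem~\ref{th:mainresult}, in particular $P_X^a > C_2 n$ with high probability. Simultaneously, invoke Lemma~\ref{lem:resettime} on $\{b_i\}$: since it has no source agents, after $T_2 = \bigo(n \log n)$ steps every agent in $\{b_i\}$ has $\textsf{level} \ge s$, i.e., lies in $\{X_s, Y_s, N\}$, again with high probability. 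Set $T = \max(T_1, T_2) = \bigo(n \log n)$ and union-bound the two high-probability events.

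At step $T$ the preserved invariant forces $v_i = a_i$ whenever $\textsf{level}(a_i) \le s$: in that regime $\min(\textsf{level}(a_i), \textsf{level}(b_i)) = \textsf{level}(a_i)$, and the color-matching clause pins down the state. The only indices where $v_i$ can differ from $a_i$ are those with $a_i = N$ (level $\infty$) and $\textsf{level}(b_i) = s$, in which case $v_i \in \{X_s, Y_s\}$ while $a_i = N$; such indices can only increase strong-state counts in $\{v_i\}$ relative to $\{a_i\}$, not decrease them. Hence $P_X^v \ge P_X^a > C_2 n$ with high probability, as required. The main subtlety is ensuring that the bi-chromatic level/color invariant is correctly preserved by coupled evolution; this is precisely what property $3$ of Observation~\ref{obs:coupling} provides, so the argument reduces to checking the invariant at step $0$, which was done above.
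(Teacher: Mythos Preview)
Your argument is essentially the same as the paper's: both split the arbitrary initial configuration into a ``sources-only plus $N$'' copy and a ``junk-only, no sources'' copy, couple all three, invoke Lemma~\ref{lem:reset_of_all-n} on the former and Lemma~\ref{lem:resettime} on the latter, and then read off the conclusion from Observation~\ref{obs:coupling}. You are slightly more careful than the paper in handling the residual level-$s$ agents in $\{b_i\}$ and in verifying that these can only raise $P_X$; the paper simply asserts that the junk copy becomes all-$N$ (which one gets by applying Lemma~\ref{lem:resettime} with parameter $s+1$ rather than $s$), making that case distinction unnecessary.
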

\begin{proof}
Denote our population as $u_i$. Let $\{w_i\}$ be copy of $\{u_i\}$, where each $X_0$ and $Y_0$ is replaced by $N$, and let $\{v_i\}$ be copy of $\{u_i\}$ where each non-$X_0$, non-$Y_0$ is replaced by $N$. Consider coupled evolution of those three populations over next steps $T_{\textrm{reset}}$. By Lemma~\ref{lem:reset_of_all-n}, $\{v_i\}$ reaches configuration that satisfies bounds from Theorem~\ref{th:mainresult}, while by Lemma~\ref{lem:resettime}, $\{w_i\}$ reaches configuration with every agent in state $N$. Applying Observation~\ref{obs:coupling} concludes the proof.
\end{proof}

\section{Extensions}

\subsection{Extension 1: Precision boosting}
\label{sec:precision}
Recall that, in addition to their state, molecules are equipped with a counter that contains an integer value in the interval $[-m, m]$, where $m$ is a parameter. 
The counter is increased by one if a molecule reacts with a strong molecule of type $X_i$, and decreased by one if it reacts with a molecule of type $Y_i$. 
If a molecule reacts with a molecule in state $N$, the counter remains unchanged. 
The output function $\gamma_n$ maps all states with a positive counter to output $X>Y$ and all states with a negative counter to $Y>X$.

In this section, we will sketch the proof of the following result. 
\begin{theorem}
\label{th:expected_stabilization}
If the population satisfies guarantees from Theorem~\ref{th:mainresult}  with $C_3 \ge 2$ for $T = 48 n \ln n$ consecutive steps, then the algorithm reaches at the end of those $T$ steps a configuration with expected number of molecules in wrong output state at most $\bigo(n/\log n)$, with high probability.
\end{theorem}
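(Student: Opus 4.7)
The plan is to analyze each molecule's counter as a biased random walk on the discrete interval $[-m, m]$ with $m = \Theta(\log\log n)$, compute its limiting distribution, show it mixes within the window, and conclude by linearity of expectation. Throughout, I condition on the hypothesis that the guarantees of Theorem~\ref{th:mainresult} hold for all $T = 48 n \ln n$ consecutive steps, so at every time $t$ in this window the instantaneous fractions $P_X(t) = \sum_i |X_i(t)|/n$ and $P_Y(t) = \sum_i |Y_i(t)|/n$ satisfy $P_X(t) \ge 2\, P_Y(t)$ and $P_X(t) = \Omega(1)$ (the latter from $C_2 > 1/2$ in Theorem~\ref{th:mainresult}).

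Fix a molecule $v$ with counter $Z(t) \in [-m, m]$. At each step $v$ participates with probability $2/n$, and conditional on participating its counter increments with probability $P_X(t) + O(1/n)$, decrements with probability $P_Y(t) + O(1/n)$, and stays otherwise. Hence $Z$ is a time-inhomogeneous birth-death chain on a path. For the frozen homogeneous chain with any fixed ratio $P_X/P_Y \ge 2$, detailed balance gives $\pi(k) \propto (P_X/P_Y)^k$, and therefore
\[
\pi(\{k < 0\}) \le \frac{\sum_{j=1}^{m}(P_X/P_Y)^{-j}}{\sum_{k=-m}^{m}(P_X/P_Y)^{k}} = \bigo\!\left((P_X/P_Y)^{-m}\right) = \bigo(2^{-m}).
\]
Choosing $m = \lceil \log_2 \log n \rceil$ makes this $\bigo(1/\log n)$, which also absorbs $\pi(\{0\})$.

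Next I would show that $Z(T)$ is close in total-variation distance to this stationary distribution. During $T = 48 n \ln n$ steps, the number of interactions involving $v$ is a sum of Bernoullis with mean $96\ln n$, so by a Chernoff bound $v$ participates in $\Omega(\log n)$ interactions with probability $1 - n^{-\Omega(1)}$; a union bound extends this to all $n$ molecules. A birth-death chain on a path of length $2m+1$ with multiplicative bias at least $2$ has mixing time $\bigo(m) = \bigo(\log\log n)$, e.g., by a standard path-coupling / contraction argument on the distance between two coupled walks. Since $\Omega(\log n) \gg \log\log n$, the distribution of $Z(T)$ is within total-variation distance $o(1/\log n)$ of the relevant stationary measure. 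Time-inhomogeneity is handled by a monotone stochastic-dominance coupling: because $P_X(t)/P_Y(t) \ge 2$ holds pointwise, $Z(t)$ stochastically dominates a homogeneous reference walk with bias exactly $2$, so the wrong-side probability of the true process is upper bounded by the wrong-side probability of the reference, which is $\bigo(2^{-m})$.

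Finally, on the high-probability event that every molecule participates $\Omega(\log n)$ times, linearity of expectation over all $n$ molecules gives $\E[M] \le n \cdot \bigl(\bigo(2^{-m}) + o(1/\log n)\bigr) = \bigo(n/\log n)$, where $M$ is the number of molecules whose counter has the wrong sign at time $T$. The main obstacle is justifying that the per-molecule chain's time-inhomogeneity (since the underlying $|X_i(t)|,|Y_i(t)|$ fluctuate) does not distort the stationary shape; the monotone coupling to the worst-case homogeneous walk of bias $2$ resolves this cleanly. A secondary point is that counters of different molecules are correlated through the shared interaction sequence, but since only an expected count is needed, linearity of expectation sidesteps the correlation entirely.
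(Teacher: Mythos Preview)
Your overall plan is sound and close in spirit to the paper's: treat each counter as a biased birth--death walk on $[-m,m]$ with $m=\Theta(\log\log n)$, use the geometric stationary law to bound the wrong-side mass by $O(2^{-m})=O(1/\log n)$, apply Chernoff to guarantee $\Omega(\log n)$ interactions per molecule, and finish by linearity of expectation. The paper organizes the core step differently, however. Rather than a mixing-time bound plus a coupling to a homogeneous reference, it first proves (Lemma~\ref{lem:boundary_hit}, via Hoeffding on the number of $X$-interactions among the first $48\ln n$ reactions) that each counter actually reaches the right endpoint $+m$ at some point in the window, and then shows by a one-step induction (Lemma~\ref{lem:mixingtime}) that the law of the counter, once started from the point mass at $+m$, stochastically dominates the stationary distribution $\pi^*$ of the homogeneous bias-$2$ walk at every subsequent time. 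This ``hit the boundary, then dominate'' decomposition avoids any mixing-time statement for the time-inhomogeneous chain.

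There is one genuine gap in your coupling step. The ratio condition $P_X(t)/P_Y(t)\ge 2$ alone does \emph{not} imply that the lazy walk $Z$ stochastically dominates a fixed homogeneous bias-$2$ reference: increment-wise domination of a step with probabilities $(p',\,1-p'-q',\,q')$ on $(+1,0,-1)$ requires $P_X(t)\ge p'$ and $P_Y(t)\le q'$ pointwise, not merely $P_X(t)/P_Y(t)\ge p'/q'$. Concretely, if $P_X(t),P_Y(t)$ were both tiny (so $Z$ is extremely lazy) while the reference has large $p'+q'$, $Z$ could remain near $-m$ long after the reference has mixed, giving $\Pr[Z<0]>\Pr[Z'<0]$. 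The clean fix---which is exactly what the paper does implicitly---is to pass to the subsequence of interactions with \emph{strong} molecules: on that timescale the step is never lazy, the up-probability is $P_X/(P_X+P_Y)\ge 2/3$, the down-probability is $\le 1/3$, and your monotone coupling to the $(2/3,1/3)$ walk is valid. Since the strong fraction $P_X+P_Y$ is $\Omega(1)$, Chernoff still yields $\Omega(\log n)$ counter-changing steps per molecule, which far exceeds the $O(m)=O(\log\log n)$ mixing time of the reference; with that adjustment the rest of your argument goes through unchanged.
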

The proof of this claim will follow from combining two technical sub-claims, Lemma~\ref{lem:boundary_hit} and Lemma~\ref{lem:mixingtime}. The first bounds the rate at which the counter moves towards the correct decision in a stable state. 

\begin{lemma}
\label{lem:boundary_hit}
For any molecule its counter becomes $\log \log n$ at some point (during considered $T$ steps) with high probability.
\end{lemma}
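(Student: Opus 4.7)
The plan is to fix a single molecule $u$, couple its (reflected) counter $S_i$ to an unreflected biased random walk, and show via Azuma--Hoeffding that this unreflected walk exceeds $+\log\log n$ within $u$'s first $K = \Theta(\log n)$ interactions with probability $1 - n^{-c-1}$; a union bound over the $n$ molecules then yields the lemma. Each of the $T = 48n\ln n$ global steps samples a uniform random ordered pair, so $u$ participates in a given step with probability $2/n$, and a standard Chernoff bound gives that the number $N(u)$ of interactions of $u$ during the $T$ steps satisfies $N(u) \ge K := C\log n$ with probability $\ge 1 - n^{-c-1}$ for a constant $C$ chosen below. By the lemma's hypothesis, throughout the interval the configuration satisfies Theorem~\ref{th:mainresult} with $C_3 \ge 2$, so $P_X \ge C_2 n$ (with $C_2 > 1/2$) and $P_X \ge 2 P_Y$. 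Conditional on the filtration $\mathcal F_{j-1}$ just before $u$'s $j$-th interaction, the partner is uniform on the remaining $n-1$ agents, so the attempted counter jump $X_j \in \{-1,0,+1\}$ satisfies $\Pr[X_j = +1 \mid \mathcal F_{j-1}] \ge 1/2$ and $\Pr[X_j = -1 \mid \mathcal F_{j-1}] \le 1/4$, giving $\E[X_j \mid \mathcal F_{j-1}] \ge \mu$ for an absolute constant $\mu > 0$, uniformly in $j$.

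To remove the clipping at $\pm m$ with $m := \log\log n$, I introduce the \emph{unreflected} walk $\hat S_i := S_0 + \sum_{j \le i} X_j$, where $X_j$ is the same attempted jump as above (so $\hat S$ and $S$ share jumps and differ only through reflection). Suppose $S_i$ never reaches $+m$ during interactions $1,\ldots,K$; then no upper-boundary reflection occurs, while any lower-boundary reflection at $-m$ only moves $S$ up relative to $\hat S$, yielding $S_i \ge \hat S_i$ throughout. Since $\hat S$ has unit jumps and integer values, the event $\hat S_i \ge m$ for some $i \le K$ would force $\hat S_j = m$ at some $j \le i$, at which point $S_j \ge m$, contradicting the supposition. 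It therefore suffices to bound $\Pr[\hat S_K < m]$.

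Decomposing $\hat S_i - S_0 = M_i + D_i$, where $M_i := \sum_{j \le i}(X_j - \E[X_j \mid \mathcal F_{j-1}])$ is a martingale with bounded increments $|M_i - M_{i-1}| \le 2$ and $D_i := \sum_{j \le i} \E[X_j \mid \mathcal F_{j-1}] \ge \mu i$, the bound $S_0 \ge -m$ implies that $\hat S_K < m$ forces $M_K \le 2m - \mu K$. For $C$ large enough, $\mu K - 2m \ge \mu K/2$, so Azuma's inequality gives
\[
\Pr[\hat S_K < m] \;\le\; \Pr\!\left[M_K \le -\tfrac{\mu K}{2}\right] \;\le\; \exp\!\left(-\frac{\mu^2 K}{32}\right) \;\le\; n^{-c-1}.
\]
Combining with the Chernoff bound on $N(u)$ and taking a union bound over the $n$ molecules completes the argument. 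The main subtlety that this structure is designed to sidestep is that the jumps $X_j$ are neither i.i.d.\ (the conditional up/down probabilities drift with the evolving configuration) nor compatible with the clipping (which changes the effective jump near the boundaries): the unreflected coupling in step two dispenses with the clipping, and the martingale decomposition in step three uses only the uniform conditional drift lower bound supplied by the lemma's hypothesis, so the temporal dependencies cause no further difficulty.
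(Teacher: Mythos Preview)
Your proof is correct and takes essentially the same approach as the paper's (Chernoff for the interaction count, then a Hoeffding/Azuma-type bound on the biased walk); you are simply more careful about the clipping, via the explicit unreflected coupling, and about the non-i.i.d.\ jumps, via the martingale decomposition, both of which the paper leaves implicit. One minor numerical slip: from $P_X \ge 2P_Y$ and $P_X + P_Y \le n$ you only get $\Pr[X_j = -1 \mid \mathcal F_{j-1}] \lesssim 1/3$, not $\le 1/4$, but since your argument only needs $\E[X_j \mid \mathcal F_{j-1}] \ge \mu$ for some absolute $\mu > 0$---and indeed $P_X - P_Y \ge P_X/2 > C_2 n/2$ gives $\mu > 1/4$---this does not affect correctness.
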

%

The counter behaves like an one dimensional random walk on integers from $- \log \log n$ to $\log \log n$, with bias $b = \frac{\sum_{i=0}^s |X_i|}{\sum_{i=0}^s |Y_i|} \ge 2$ towards +1 steps. 

Let $R$ be a random walk on integers from $\log \log n$ to $-\log \log n$. The transition probability of moving from $i$ to $i+1$ is $p_{i,i+1}=\frac{b}{b+1}$ and the probability of moving from $i$ to $i-1$ is $p_{i, i-1}=\frac{1}{b+1}$. If $i=\log \log n$, $p_{i,i}=\frac{b}{b+1}$ instead of $p_{i,i+1}$. Similarly for $i = -\log \log n$ is $p_{i,i}=\frac{1}{b+1}$ instead of $p_{i,i-1}$.

\begin{lemma}
\label{lem:stationary}
For the stationary distribution $\pi^*$ of the random walk $R$ as defined above, the following holds: 
$$
\frac{\sum_{i=1}^{\log \log n}\pi^*_i}{\sum_{i=-\log\log n}^{-1}\pi^*_i}=
b^{\log\log n+1} \ge 2 \log n.
$$
\end{lemma}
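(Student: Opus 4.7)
The plan is to recognize that the chain $R$ is a birth-death chain on the integer interval $[-L, L]$ with $L = \log\log n$, and is therefore reversible: the stationary distribution $\pi^*$ is determined entirely by the detailed balance equations $\pi^*_i \, p_{i, i+1} = \pi^*_{i+1} \, p_{i+1, i}$. First I would note that the self-loop probabilities at the two boundary states $\pm L$ are irrelevant for detailed balance, since those equations only relate consecutive states $i$ and $i+1$ through the transitions $p_{i,i+1}$ and $p_{i+1,i}$, both of which take their interior values $b/(b+1)$ and $1/(b+1)$ for every pair $(i, i+1)$ with $-L \le i \le L-1$.

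Plugging these transition probabilities into detailed balance yields the one-step recursion $\pi^*_{i+1} = b\,\pi^*_i$, so $\pi^*_i \propto b^i$ across the whole state space. Because the normalization constant cancels in the ratio we care about, I can compute the two geometric sums directly:
\begin{align*}
\sum_{i=1}^{L} \pi^*_i &\;\propto\; \sum_{i=1}^{L} b^i \;=\; b \cdot \frac{b^L - 1}{b - 1}, \\
\sum_{i=-L}^{-1} \pi^*_i &\;\propto\; \sum_{i=-L}^{-1} b^i \;=\; b^{-L} \cdot \frac{b^L - 1}{b - 1}.
\end{align*}
Taking the ratio, the factor $(b^L - 1)/(b-1)$ cancels and one is left with $b \cdot b^L = b^{L+1} = b^{\log\log n + 1}$, which is exactly the claimed identity. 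Finally, since $b \ge 2$ by hypothesis, $b^{\log\log n + 1} \ge 2^{\log\log n + 1} = 2\log n$, giving the stated lower bound.

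There is essentially no serious obstacle here; the only thing to be careful about is confirming that the boundary self-loops do not perturb the geometric form of $\pi^*$, and that the two sums really pair up to make the $(b^L - 1)/(b-1)$ factor cancel cleanly. Both are straightforward, so the lemma reduces to a short computation via reversibility of the birth-death chain.
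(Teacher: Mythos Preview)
Your proposal is correct and follows essentially the same approach as the paper: both derive the geometric recursion $\pi^*_{i+1} = b\,\pi^*_i$ and use it to evaluate the ratio. The only cosmetic difference is that the paper avoids computing the two geometric sums explicitly by noting that $\pi^*_{i+L+1}/\pi^*_i = b^{L+1}$ and summing this over $i \in [-L,-1]$, whereas you expand both sums and cancel the common factor; the substance is the same.
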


Let $\mathcal{C}$ be a random walk on integers from $\log \log n$ to $(-\log \log n)$ defined such that the state of $\mathcal{C}$ after $t$ steps equals the value of the counter of molecule $m$ after $t$ steps. 
The transition probability of moving from $i$ to $i+1$ at step $t$ is $q_{i,i+1}(t)=\frac{|X_i(t)|}{|X_i(t)|+|Y_i(t)|}$ and the probability of moving from $i$ to $i-1$ is $q_{i, i-1}(t)=\frac{|Y_i(t)|}{|X_i(t)|+|Y_i(t)|}$. If $i=\log \log n$, $p_{i,i}(t)=\frac{|X_i(t)|}{|X_i(t)|+|Y_i(t)|}$ instead of $p_{i,i+1}$. Similarly for $i = -\log \log n$ is $p_{i,i}(t)=\frac{|Y
_i(t)|}{|X_i(t)|+|Y_i(t)|}$ instead of $p_{i,i-1}$.

\begin{lemma}
\label{lem:mixingtime}
If a molecule counter was already equal to $\log \log n$ after $t'$ steps, the probability that the counter is negative at any moment $t \ge t'$ is at most $\frac{1}{\log n}$, for $t,t'$ from $T$ steps from Theorem~\ref{th:expected_stabilization}.
\end{lemma}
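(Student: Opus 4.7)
The plan is to bound $\Pr[\mathcal{C}(t) < 0]$ via a stochastic-dominance argument against a simpler walk in its stationary distribution. First, by the hypothesis that Theorem~\ref{th:mainresult} holds with $C_3 \ge 2$ throughout the considered interval, at every step $t$ of interest we have $\sum_i |X_i(t)| \ge 2 \sum_i |Y_i(t)|$. Hence whenever molecule $m$ reacts with a strong molecule, the partner is of type $X_i$ with probability at least $2/3$ (moving the counter up by one) and of type $Y_i$ with probability at most $1/3$; a reaction with a neutral molecule, or no reaction at all, leaves the counter unchanged.

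I will work with the \emph{deficit} walk $d(t) := \log\log n - \mathcal{C}(t) \in \{0,1,\ldots,2\log\log n\}$, which has reflection at $0$ (corresponding to $\mathcal{C}$ at the top) and at $2\log\log n$ (corresponding to $\mathcal{C}$ at the bottom). The hypothesis $\mathcal{C}(t') = \log\log n$ gives $d(t') = 0$, and the target event $\{\mathcal{C}(t)<0\}$ coincides with $\{d(t)>\log\log n\}$. When $d$ updates, it moves down with probability at least $2/3$, up with probability at most $1/3$, and reflects at $0$.

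Next, I introduce an auxiliary walk $d^\ast$ on $\{0,1,2,\ldots\}$, reflected only at $0$ and with no upper barrier, whose transitions are fixed at probability $1/3$ up and $2/3$ down; initialize $d^\ast$ from its stationary distribution, which by detailed balance is the shifted geometric $\pi^\ast_i = (1/2)^{i+1}$. Couple $d$ and $d^\ast$ at each update time of $m$ using a shared uniform draw $U \in [0,1]$: $d$ moves up iff $U < p$ (where $p \le 1/3$ is the up-probability for $d$ at that step), and $d^\ast$ moves up iff $U < 1/3$. Thus $d^\ast$ moves up at least as often as $d$; combined with the fact that the upper reflection of $d$ can only suppress $d$ further, a straightforward induction starting from $d(t') = 0 \le d^\ast(t')$ gives $d(t) \le d^\ast(t)$ for all $t \ge t'$.

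Combining these ingredients yields
\[
\Pr[\mathcal{C}(t) < 0] = \Pr[d(t) > \log\log n] \le \Pr[d^\ast(t) > \log\log n] = \sum_{i > \log\log n} (1/2)^{i+1} = \frac{1}{2\log n} \le \frac{1}{\log n},
\]
where the middle equality uses that $d^\ast$ is stationary at every time. The main subtlety is handling the time-varying up-probabilities of $d$ together with its two reflecting barriers: the coupling against a worst-case stationary chain cleanly dominates the time-varying effect, and the upper reflection of $d$ only sharpens the dominance. The fact that the counter updates only at reactions with strong molecules (and not on every scheduler step) is immaterial, since the coupling is indexed by update events and $d^\ast$'s stationary bound is uniform in the number of such events.
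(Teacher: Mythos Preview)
Your proof is correct and follows the same high-level idea as the paper's: compare the actual counter walk against the stationary distribution of a fixed-bias reference walk via stochastic dominance. The technical implementation differs, however. The paper keeps the reference walk $R$ on the same finite interval $[-\log\log n,\log\log n]$ with bias $b\ge 2$ and proves, by induction on time, that the \emph{distribution} $\sigma(t)$ of $\mathcal{C}$ dominates the stationary $\pi^\ast$ of $R$ in the usual cumulative sense; the bound then follows from Lemma~\ref{lem:stationary}. You instead pass to the deficit walk, drop the upper barrier for the auxiliary process, start it from its (geometric) stationary law, and establish \emph{pathwise} dominance $d(t)\le d^\ast(t)$ via a common-uniform coupling. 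Your route avoids the separate computation in Lemma~\ref{lem:stationary} (the geometric tail is read off directly) and is arguably more transparent about how the time-varying transition probabilities and the two reflecting barriers are handled; the paper's route has the minor advantage that it never leaves the finite state space and would give a slightly sharper constant if one tracked the actual bias $b$ rather than the worst case $b=2$.
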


We can therefore conclude that in the ``boosted'' version of the comparison protocol, all but an expected $\bigo( n / \log n )$ fraction of the nodes have the correct output. 
We can further boost this result using standard concentration bounds.

\subsection{Extension 2: Time-space trade-offs}
\label{sec:trade-off}

We now provide a sketch of how one can reduce the space cost of the protocols, at the price of slower convergence. First, let us assume in design of protocol we have access to probabilistic transitions. That is, we write
$$A + B \stackrel[1-p]{p}{\rightarrow} \begin{cases} C + D\\ C' + D' \end{cases}$$
to denote that top transition happens with probability $p$ and bottom one with probability $1-p$. For $p<1/2$, this can be simulated by usage of synthetic coin, with roughly $\bigo(\log \log \frac{1}{p})$ extra states.
In our protocol, all probabilistic transitions will be using the same synthetic coin. 

Given such a coin, we will augment our previous protocol with probabilistic interactions such that, fixing probability $p$, roughly, each node at level $i$ will in expectation \emph{inform} roughly $1/p$ nodes before being moved to level $i+1$. 
We then show that there exists a value of the synthetic coin $p'$ such that:
\begin{itemize}
\item The protocol requires $\Theta( \log_{\frac{1}{p'}} n)$ levels,
\item The protocol requires $\Theta(\log \log \frac{1}{p'})$ extra states for the coin,
\item The protocol is slower to converge (w.r.t. to the naive analysis) by a factor of $\frac{1}{p'}$ (this is to be subsumed by analysis in the following subsection)
\item The protocol requires the initial agent count to be larger by a factor of $\bigo(\frac{1}{\sqrt{p'}})$ for the same concentration guarantees.
\end{itemize}


We now provide a sketch of how one can reduce the space usage of the protocols at the cost of slower convergence. First, let us assume in design of protocol we have access to probabilistic transitions. That is, we write
$$A + B \stackrel[1-p]{p}{\rightarrow} \begin{cases} C + D\\ C' + D' \end{cases}$$
to denote that top transition happens with probability $p$ and bottom one with probability $1-p$. For $p<1/2$, this can be simulated by usage of synthetic coin, with roughly $\bigo(\log \log \frac{1}{p})$ extra states.
In our protocol all probabilistic transitions will be using the same synthetic coin. We highlight two standard way of using synthetic coin: 
\begin{enumerate}
\item The coin is stored on extra states, thus the state space is \emph{multiplied} by number of states used by coin. To simulate the coin-flip, agent stores the win/loss bit taken from the last interaction with another agent.
\item  The population is divided into protocol-part and coin-part. Agents flip coin by storing bit win/loss of last interaction with coin-part. The total number of states is roughly the sum number of states of both parts.
\end{enumerate}

\begin{theorem}[\cite{GS18}]
There is a protocol that constructs a synthetic coin, in $\bigo(n \log n)$ interactions. Given parameter $p < 1/2$, it constructs synthetic coin using $\bigo(\log \log \frac{1}{p})$ states and the coin-flip probability is $p'$, where $p^2 \le p' \le p$. It succeeds w.h.p.
\end{theorem}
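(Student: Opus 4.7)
The plan is to follow the standard synthetic coin construction in the population protocol literature, whose two main ingredients are an $\bigo(1)$-state fair coin and a repeated-squaring amplifier that converts success probability $q$ into success probability $\approx q^2$ at the cost of one extra ``level'' of state per squaring.

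First I would implement a fair synthetic coin: every agent maintains a single bit, and upon an interaction the agents update their bits as a canonical function of the current pair (for instance, each agent copies or XORs with the partner's bit). Because the scheduler is uniform and each agent participates in $\Theta(\log n)$ interactions within $\bigo(n \log n)$ steps with high probability, a standard coupling/mixing argument shows that the bits across the population look, from the point of view of any single agent, like approximately independent $\mathrm{Bernoulli}(1/2)$ samples, up to errors that can be absorbed into the final slack.

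Next, to push the bias down to $p$, I would stack $k = \lceil \log \log (1/p) \rceil$ squaring layers on top of the fair coin. Each agent carries its current level $\ell \in \{0,\ldots,k\}$ together with one bit $b_\ell$; by design, the marginal probability that $b_\ell = 1$ at an agent sitting at level $\ell$ is $q_\ell \approx (1/2)^{2^\ell}$. A transition at level $\ell$ either advances an agent (and its partner, if at the same level) to level $\ell+1$ with bit equal to the AND of the two current bits, or refreshes the pair at level $\ell$ using a fair coin flip. This gives the recursion $q_{\ell+1} \approx q_\ell^2$, so after $k$ layers the top-level bit is a coin flip with success probability $(1/2)^{2^k}$, which by the choice of $k$ lies in the permitted interval $[p^2, p]$. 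The state is just a level together with a bit, giving $\bigo(k) = \bigo(\log \log (1/p))$ states overall.

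The main obstacle is the per-layer concentration/independence analysis: bits at level $\ell+1$ are clearly correlated with those at level $\ell$, so one must argue that the empirical fraction of bits equal to $1$ at each level concentrates around $q_\ell$ and that agents see approximately independent partner bits. I would handle this with a martingale or coupling argument at each layer in the style of Section~\ref{sec:conc} of this paper, controlling the deviation in the empirical frequency using that each level contains $\Omega(n / 2^\ell)$ agents for the relevant regime, and then taking a union bound over the $\bigo(\log \log (1/p))$ layers. The generous output range $[p^2, p]$ is what makes this work: it absorbs the constant-factor distortion introduced by each squaring step, and the constraint $p < 1/2$ is needed to start the recursion below the fair-coin probability. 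The $\bigo(n \log n)$ interaction bound follows because the fair coin mixes in this time and each subsequent level inherits the same mixing time via the same uniform scheduling argument.
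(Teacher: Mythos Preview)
The paper does not prove this statement at all: it is quoted verbatim as a result of~\cite{GS18} and used as a black box in Section~\ref{sec:trade-off}. So there is no ``paper's own proof'' to compare your proposal against.

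That said, your sketch is the standard synthetic-coin construction (fair coin plus $\lceil \log\log(1/p)\rceil$ squaring layers via AND), and the arithmetic checks out: $(1/2)^{2^k}$ lands in $[p^2,p]$ for $k = \lceil \log_2 \log_2 (1/p)\rceil$, and the state count is $\Theta(k)$. The part you flag as the ``main obstacle'' --- per-layer concentration and near-independence of partner bits under the uniform scheduler --- is indeed the real work, and appealing to the martingale machinery of Section~\ref{sec:conc} is overkill and somewhat circular here (that section is about the comparison dynamics, not the coin). If you actually wanted to prove this from scratch you would instead follow the original argument in~\cite{GS18}, which handles the fair-coin mixing and the squaring concentration directly; but for the purposes of this paper no proof is expected.
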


\subsection{Protocol}
We first provide the protocol for detection.
\begin{equation*}
\begin{aligned}[c]
\text{For all $1\le i \le s:$} \\
U_0 + U_i \rightarrow& U_0 + U_1 \\
\text{~}\\
\text{For all $1\le i < s:$} \\
U_i + N \stackrel[1-p]{p}{\rightarrow}& \begin{cases}U_{i+1} + U_{i+1}\\U_{i} + U_{i+1}\end{cases} \\
\end{aligned}
\qquad
\begin{aligned}[c]
\text{For all $1\le i \le j \le s; i\neq s:$} \\
U_i + U_j \stackrel[1-p]{p}{\rightarrow}& \begin{cases}U_{i+1} + U_{i+1}\\U_{i} + U_{i+1}\end{cases} \\
\text{~}\\
U_s + N \rightarrow& N + N \\
U_s + R_s \rightarrow& N + N \\
\end{aligned}
\end{equation*}
The intuition is that each node at level $i$ is in expectation \emph{informing} roughly $1/p$ nodes before being moved to level $i+1$.

\subsection{Analysis sketch}

Since we follow the pattern set out in the previous sections, we will not present the analysis framework again, and instead focus on the calculations. 
Let us fix a time $t \geq 0$. 
Denote by $\Delta_1(t),\Delta_2(t),\Delta_3(t),\Delta_4(t),\Delta_5(t) \in \{0,1\}$ the indicator variables for the following events at step $t$:
\begin{itemize}
\item $\Delta_1(t) = 1$ iff first reacting agent was from $R_{c+1}$,
\item $\Delta_2(t) = 1$ iff second reacting agent was from $R_{c+1}$,
\item $\Delta_3(t) = 1$ and $\Delta_4(t) = 1$ iff any of the reacting agents were from $R_{c}$.
\item $\Delta_5(t) = 1$ iff any of the reacting agents were from $R_{c+1}$.
\end{itemize}

In the following we use the fact that while $r_c = o(1)$, the square terms do not affect asymptotic of the system and can be discarded. This is justified in this rough analysis, since $r_c = \Theta(1)$ only in a constant number of levels.
Then we obtain the following: 
\begin{align*}
\mathbb{E}\Big[|R_{c+1}(t+1)|\ |\ t\Big] &= p' \cdot \mathbb{E}\Big[|R_{c+1}(t)| - \Delta_1(t) - \Delta_2(t) + \Delta_3(t) + \Delta_4(t)\ |\  t\Big] + \\
&+ (1-p') \cdot \mathbb{E}\Big[|R_{c+1}(t)| - \Delta_1(t) - \Delta_2(t) + \Delta_3(t) + \Delta_5(t)\ |\ t\Big]\\
&= |R_{c+1}(t)| - 2 \cdot r_{c+1}(t) + (1+p') [1 - (1-r_{c}(t))^2]  + (1-p')[1 - (1-r_{c+1}(t))]^2\\
&\approx |R_{c+1}(t)| - 2 p' r_{c+1}(t)  + 2 (1+p')r_{c}(t)\\
&= \left(1-\frac{2p'}{n}\right) |R_{c+1}(t)| + \frac{2p'}{n} \cdot \frac{1+p'}{p'}|R_c(t)|\\
&= \left(1-\frac{2p'}{n}\right) |R_{c+1}(t)| + \frac{2p'}{n} \cdot A(t)
\end{align*}
where we denoted $A(t) = \frac{1+p'}{p'}|R_c(t)|$.
This leads to a steady-state solution of the form $\tilde{R}_{c+1}(t) \approx \frac{1+p'}{p'} \tilde{R}_c(t)$.

For the variance, we obtain
\begin{align*}
\mathrm{Var}\Big[|R_{c+1}(t+1)|\ |\  t\Big] &= \mathrm{Var}\Big[|R_{c+1}(t+1)| - |R_{c+1}(t)|\ |\ t\Big] \\
&\le 4\Big(\mathrm{Var}\Big[\Delta_1(t) | t\Big] + \mathrm{Var}\Big[\Delta_2(t) | t\Big] + \mathrm{Var}\Big[\Delta_3(t) | t\Big] + p' \mathrm{Var}\Big[\Delta_4(t) | t\Big] + (1-p')\mathrm{Var}\Big[\Delta_5(t) | t\Big]\Big) \\
&\le 4\Big(\E\Big[\Delta_1(t)^2 | t\Big] + \E\Big[\Delta_2(t)^2 | t\Big] + \E\Big[\Delta_3(t)^2 | t\Big] + p'\E\Big[\Delta_4(t)^2 | t\Big] + (1-p')E\Big[\Delta_5(t)^2 | t\Big]\Big)\\
&= 4\Big(\E\Big[\Delta_1(t) | t\Big] + \E\Big[\Delta_2(t) | t\Big] + \E\Big[\Delta_3(t) | t\Big] + p'\E\Big[\Delta_4(t) | t\Big] + (1-p')\E\Big[\Delta_5(t) | t\Big]\Big)\\
&\approx 4 \Big(r_{c+1}(t) + r_{c+1}(t) + 2(1+p')r_c(t) + 2(1-p')r_{c+1}(t)\Big)\\
&\le 16 \frac{|R_{c+1}(t)|}{n} + 8p' \frac{A(t)}{n}.
\end{align*}

We can thus invoke Theorem~\ref{th:general_martingale} with $a = \tilde{R}_{c+1}$, $\lambda = 2p'$, $\delta = 16$, $\eta = 8p'$ and $\gamma=1$. Thus we get that if there is on level $c$ an absolute error $\tilde{R}_{c} \cdot \varepsilon$, on level $c+1$ the error becomes at most $\tilde{R}_{c+1} \cdot \left( \varepsilon + \bigo\left(\sqrt{\frac{1}{p'} \frac{\log n}{\tilde{R}_{c+1}}}\right)\right)$. The time for stabilizing single level becomes then $\Theta( \frac{1}{p'} n \log n \log \log n)$. Thus we draw a conclusion that in total time $\Theta( \frac{1}{p'} n \log^2 n \log \log n)$ we reach total error $\bigo(n \sqrt{\frac{1}{p'} \frac{\log n}{\tilde{R}_0}})$. We summarize the take-away message of this subsection:
Protocol supplemented with synthetic coin $p'$ is:
\begin{itemize}
\item requires $\sim \log_{\frac{1}{p'}} n$ levels,
\item requires $\Theta(\log \log \frac{1}{p'})$ extra states for synthetic coin,
\item slower to converge (wrt to naive analysis) by a factor of $\frac{1}{p'}$ (this is to be subsumed by analysis in the following subsection)
\item requires initial agent count to be larger by a factor of $\bigo(\frac{1}{\sqrt{p'}})$ for the same concentration guarantees.
\end{itemize}
\subsection{Bootstraping}
We now comment on how bootstrapping analysis translates to this new setting. First, we extend definition of coupling so that random choices of all coupled protocols are identical. We then analyze the decay time for the new protocol.
\begin{lemma}
For any integer $s>0$ and $c\ge1$ if there are no agents $u$ with $level(u) = 0$ then after $\Theta(\frac{1}{p'} c n \log n + \frac{1}{p'^2}s n)$ interactions with high probability $1- n^{-c}$ there is no agent $u$ with $level(u) < s$. \end{lemma}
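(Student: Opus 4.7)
The argument extends Lemma~\ref{lem:resettime} to the coin-based protocol. The starting point is monotonicity of the minimum level $c^*(t) := \min_u \textsf{level}(u)$: a case analysis of the four reaction rules shows that if a pair reacts at levels $a \le b$ with $a \ge 1$ (guaranteed, as no level-$0$ agents exist or can be created), the two resulting agents have levels in $\{a, a{+}1\}$, so the population minimum never decreases. The execution thus partitions into $s-1$ sequential phases, with phase $c$ ending when $c^*$ advances from $c$ to $c+1$.

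I would first carry out a per-phase analysis. While $c^* = c$, any partner of a level-$c$ agent has level $\ge c$, and inspection of the rules shows that each such interaction promotes the level-$c$ agent with probability at least $p'$ (either via the coin's $p'$-branch, or because a same-level reaction always sends one of the two up). Since a fixed level-$c$ agent is chosen in a given step with probability $2/n$, it is promoted per step with probability $\ge 2p'/n$. A Chernoff bound plus a union bound over the $\le n$ agents at level $c$, mirroring the proof of Lemma~\ref{lem:resettime}, then gives that phase $c$ ends within $\Theta(cn\log n / p')$ steps with failure probability $n^{-c-1}$. Summed naively across the $s-1$ phases this yields $\Theta(scn\log n / p')$, which is weaker than the claim.

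To obtain the sharper bound, I would split the analysis into a \emph{bulk} regime and a \emph{straggler} regime. In the bulk, after $T_1 = \Theta(cn\log n / p')$ steps a tightened version of the above argument shows that, with probability $\ge 1 - n^{-c}$, only $O(1/p')$ agents remain at levels below $s$. In the straggler regime each such residual agent must climb at most $s$ further levels; since it interacts at rate $2/n$ and is promoted per interaction with probability $\ge p'$, a single straggler's climb takes $\Theta(sn/p')$ steps in expectation, and a Chernoff tail across the $O(1/p')$ stragglers concentrates the total straggler time at $\Theta(sn/p'^2)$. The chief obstacle is handling \emph{demotion cascades}: when a straggler at level $i$ is promoted by interacting with a partner at a much higher level $j$, the partner is simultaneously reset to level $i+1$, potentially re-creating a straggler just above. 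This coupling between promotion and demotion prevents a naive independent-agents analysis, and I would manage it by a charging argument, attributing each cascade to the triggering straggler and noting that since each straggler attempts at most $O(s)$ promotions, the total extra level-climbs injected during the straggler phase are $O(s/p')$, which fits inside the $\Theta(sn/p'^2)$ budget already accounted for.
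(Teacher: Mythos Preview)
Your approach diverges substantially from the paper's. The paper uses a potential-function argument: set $d = 2(1+p')/p'$ and $\Phi_t = \sum_u d^{-\textsf{level}_t(u)}$; a direct computation on any interacting pair shows $\mathbb{E}[\Phi_{t+1}\mid t] \le (1-p'/n)\,\Phi_t$, and Markov's inequality at time $T = \frac{n}{p'}\bigl((c{+}1)\ln n + s\ln d\bigr)$ finishes the proof in one stroke.

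Your combinatorial route has two genuine gaps. First, the bulk claim is false as stated. In time $T_1 = \Theta(cn\log n/p')$ the potential shrinks by a factor $n^{-\Theta(c)}$, which translates into the minimum level rising by only $\Theta\bigl(c\log n/\log(1/p')\bigr)$, since each level accounts for a factor $d \approx 1/p'$ of potential. Whenever $s$ exceeds this quantity---and the lemma must cover this, since $s$ and $p'$ are independent parameters---essentially all $n$ agents are still below level $s$ after $T_1$, not $O(1/p')$ of them. There is no ``tightened per-phase'' argument that avoids this: each phase really does cost $\Theta(n\log n/p')$ when the minimum level carries $\Theta(n)$ agents, and the next minimum level again carries $\Theta(n)$ agents.

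Second, the straggler accounting undercounts the cascades geometrically. A straggler at level $i$ makes on the order of $1/p'$ interactions before its own promotion, and \emph{every one} of those interactions pulls its partner down to level $i{+}1$. Thus one straggler at level $i$ spawns roughly $1/p'$ new stragglers at level $i{+}1$; those spawn $(1/p')^2$ at level $i{+}2$; and so on. The wake behind a single initial straggler grows like $(1/p')^k$ after $k$ levels and saturates at $n$ long before level $s$. Your charging argument tallies only first-generation demotions and so misses this branching entirely. The reason the paper's potential works is precisely that the base $d = \Theta(1/p')$ is calibrated so that the $1/p'$-fold replication one level up is exactly offset by the $d^{-1}$ discount in the weight.
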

\begin{proof}
For agent $u$ we define its potential (at time $t$) to be $\Phi_t(u) = d^{- level_t(u)}$, where $d = 2\frac{1+p'}{p'}$. We define potential of a whole population to be $\Phi_t = \sum_{u} \Phi_t(u)$. Consider two agents $u,v$ at arbitrary time $t$, with $level_t(u) = x$ and $level_t(v) \ge x$ for some constant $x$. We have then,
$$\mathbb{E}[ \Phi_{t+1}(u) + \Phi_{t+1}(v)\ |\ t ] = p' \cdot 2 d^{-(x+1)} + (1-p') \cdot (d^{-x} + d^{-(x+1)}) = d^{-x}\left(\frac{1+p'}{d} + (1-p')\right) \le \alpha \left(\Phi_t(u)+\Phi_t(v)\right).$$
where $\alpha = \left(\frac{1+p'}{d} + (1-p')\right) = (1-\frac{p'}{2})$.
$$\mathbb{E}[ \Phi_{t+1}\ |\ t] \le \sum_{u} (1-\frac{2}{n}) \Phi_t(u) + \frac{2}{n} \alpha \Phi_t(u) = \left(1 - \frac{2}{n}(1-\alpha)\right) \Phi_t = \left(1 - \frac{p'}{n}\right) \Phi_t,$$
so by submartingale property $\mathbb{E}[ \Phi_t ] \le  \left(1 - \frac{p'}{n}\right)^t n.$
Denote by $T =  \left((c+1) \ln(n) + s \ln(d)\right) \frac{n}{p'} = \Theta(\frac{1}{p'} c n \log n + \frac{1}{p'^2}s n)$. We have then $\mathbb{E}[\Phi_T] \le n^{-c} \cdot d^{-s}$, so $\Pr[\Phi_T \ge d^{-s}] \le n^{-c}$.

\end{proof}

Substituting $s = \bigo(\log n)$ and $c = 3$, and repeating coupling analysis, we reach that a concentration happens in time $T = \bigo(\frac{1}{p'^2}n \log n)$.
We can now examine what this result yields for some non-trivial parameter regimes:
\begin{itemize}
\item $p' = \frac{1}{2^{\sqrt{\log \log n}}}$, with $\Theta(\log \log \log n)$ states for the coin, $\frac{\log n}{\sqrt{\log \log n}}$ levels, and convergence in $ \log^{1+o(1)} n$  parallel time.
\item $p' = \frac{1}{\log^{\Theta(1)} n}$, with $\Theta(\log \log \log n)$ states for the coin, $\frac{\log n}{\log \log n}$ levels, and convergence in $\log^{\Theta(1)} n$ parallel time.
\item $p' = \frac{1}{2^{\sqrt{\log n}}}$, with $\Theta(\log \log n)$ states for the coin, $\sqrt{\log n}$ levels, and convergence in $2^{\sqrt{\log n}}$ parallel time.
\item $p' = \frac{1}{2^{(\log n)/(\log \log n)}}$ with $\Theta(\log \log n)$ states for the coin, $\log \log n$ levels, and convergence in $n^{o(1)}$ parallel time.
\end{itemize}

\subsection{Extension 3: Leaks}
We now consider a scenario where leaks can occur. That is, there are occurring (possibly adversarially) spontaneous reactions of type $A \to B$ for some non-catalytic states $A,B$. The states $X_0$ and $Y_0$ are \emph{catalytic} states in our setting--as they are not created or modified by the algorithm, only detected--they are not affected by leaks. 
That is, $X_0$ and $Y_0$ cannot spuriously appear or disappear. 
However, all other states may be affected by leaks, and therefore may appear or disappear spuriously, as a consequence of leaks. 
More precisely, the protocol is subject to arbitrary reactions of the type $A \to B$, where $A$ and $B$ are arbitrary nodes in states outside $\{X_0, Y_0\}$. 
At the same time, as in~\cite{alistarh2017robust}, we assume that the rate of leaks is bounded by a parameter $\zeta$. 
Upon reflection, we notice that, in the comparison problem, the strongest adversarial leak strategy would be to leak first-level strong states from the majority state to the minority one: say $X_1 \to Y_1$.

\begin{corollary}
\label{cor:leaks}
If the leak probability $\zeta$ is $\bigo(1/n)$, then the asymptotic guarantees for detection and comparison are asymptotically identical to the ones provided in the leakless case.
\end{corollary}

\subsection{False-positive leaks}
We refine the analysis of the detection protocol, noting that incorporating false-positive leaks into the protocol, it takes following form.

\begin{equation*}
\begin{aligned}[c]
\text{For all $1\le i \le s:$} \\
U_0 + U_i \rightarrow& U_0 + U_1 \\
\text{~}\\
\text{For all $1\le i < s:$} \\
U_i + N \stackrel[\zeta]{1-\zeta}{\rightarrow}& \begin{cases}U_{i+1} + U_{i+1}\\U_{1} + U_{1}\end{cases} \\
\end{aligned}
\qquad
\begin{aligned}[c]
\text{For all $1\le i \le j \le s; i\neq s:$} \\
U_i + U_j \stackrel[\zeta]{1-\zeta}{\rightarrow}& \begin{cases}U_{i+1} + U_{i+1}\\U_{1} + U_{1}\end{cases} \\
U_s + N \stackrel[\zeta]{1-\zeta}{\rightarrow}& \begin{cases}N + N\\U_{1} + U_{1}\end{cases} \\
U_s + U_s  \stackrel[\zeta]{1-\zeta}{\rightarrow}& \begin{cases}N + N\\U_{1} + U_{1}\end{cases}\\
\end{aligned}
\end{equation*}

In particular, recall the notation where we denote by $\Delta_1(t),\Delta_2(t),\Delta_3(t),\Delta_4(t) \in \{0,1\}$ the indicator variables for the following events at step $t$, which govern the evolution of $|R_{c + 1}(t + 1)|$:
\begin{itemize}
\item $\Delta_1(t) = 1$ iff first reacting agent was from $R_{c+1}$,
\item $\Delta_2(t) = 1$ iff second reacting agent was from $R_{c+1}$,
\item $\Delta_3(t) = 1$ and $\Delta_4(t) = 1$ iff any of the reacting agents were from $R_{c}$.
\end{itemize}

We obtain the following recurrence on the expected value of $|R_{c + 1}(t + 1)|$: 
\begin{align*}
\mathbb{E}\Big[|R_{c+1}(t+1)|\ |\ t\Big] &= \mathbb{E}\Big[|R_{c+1}(t)| - \Delta_1(t) - \Delta_2(t)\ |\  t\Big] + (1-\zeta)\mathbb{E}[\Delta_3(t) + \Delta_4(t)\ |\ t] + 2\zeta\\
&= |R_{c+1}(t)| - 2 r_{c+1}(t) + 2(1-\zeta) [1 - (1-r_{c}(t))^2] + 2 \zeta \\
&= (1-\frac{2}{n}) |R_{c+1}(t)| + \frac{2}{n} A'(t)
\end{align*}
where we define $A'(t) = n \cdot [1 - (1-\zeta)(1-r_{c}(t))^2)]$. This leads to a steady-state solution of exact form
$1-\tilde{r}'_c = (1-\zeta)^{2^c-1} (1-\tilde{r}_0)^{2^c}.$
\subsection{False-negative leaks}
We refine the analysis of the detection protocol, noting that incorporating false-positive leaks into the protocol, it takes following form.

\begin{equation*}
\begin{aligned}[c]
\text{For all $1\le i \le s:$} \\
U_0 + U_i  \stackrel[\zeta]{1-\zeta}{\rightarrow}& \begin{cases}U_{0} + U_{1}\\U_0+N\end{cases} \\
\text{For all $1\le i < s:$} \\
U_i + N \stackrel[\zeta]{1-\zeta}{\rightarrow}& \begin{cases}U_{i+1} + U_{i+1}\\N+N\end{cases} \\
\end{aligned}
\qquad
\begin{aligned}[c]
\text{For all $1\le i \le j \le s; i\neq s:$} \\
U_i + U_j \stackrel[\zeta]{1-\zeta}{\rightarrow}& \begin{cases}U_{i+1} + U_{i+1}\\ N+N\end{cases} \\
\text{~}\\
U_s + N &\rightarrow N+N \\
U_s + U_s  &\rightarrow N+N\\
\end{aligned}
\end{equation*}

We obtain the following recurrence on the expected value of $|R_{c + 1}(t + 1)|$: 
\begin{align*}
\mathbb{E}\Big[|R_{c+1}(t+1)|\ |\ t\Big] &= \mathbb{E}\Big[|R_{c+1}(t)| - \Delta_1(t) - \Delta_2(t)\ |\  t\Big] + (1-\zeta)\mathbb{E}\Big[\Delta_3(t) + \Delta_4(t)\ |\ t\Big] + 2\zeta\\
&= |R_{c+1}(t)| - 2 r_{c+1}(t) + 2(1-\zeta) [1 - (1-r_{c}(t))^2] \\
&= (1-\frac{2}{n}) |R_{c+1}(t)| + \frac{2}{n} A''(t)
\end{align*}
where we define $A''(t) = n \cdot(1-\zeta) [1 - (1-r_{c}(t))^2)]$. This leads to a steady-state solution 
$\tilde{r}''_{c+1} \approx (1-\zeta)2 \tilde{r}''_c$
\subsection{Concentration}
Second, we bound the variance by direct calculation, which applies to both types of leaks.
\begin{align*}
\mathrm{Var}\Big[|R_{c+1}(t+1)|\ |\  t\Big] &= \mathrm{Var}\Big[|R_{c+1}(t+1)| - |R_{c+1}(t)|\ |\ t\Big] \\
&\le 4\Big(\mathrm{Var}\Big[\Delta_1(t) | t\Big] + \mathrm{Var}\Big[\Delta_2(t) | t\Big] + (1-\zeta)\mathrm{Var}\Big[\Delta_3(t) | t\Big] + (1-\zeta)\mathrm{Var}\Big[\Delta_4(t) | t\Big]\Big)\\
&\le 4 (2r_{c+1}(t)+ 2(1-\zeta)(1-(1-r_c(t))^2)) \\
&\le 8 \frac{|R_{c+1}(t)|}{n} + 8 \frac{A''(t)}{n}\\
&\le 8 \frac{|R_{c+1}(t)|}{n} + 8 \frac{A'(t)}{n}.
\end{align*}
An application of Theorem~\ref{th:general_martingale} follows, giving the same concentration in both cases around $\tilde{r}'_c$ and $\tilde{r}''_c$ respectively. Those two cases actually represent upper- and lower- bounds on possible steady-state solutions. We have
$\tilde{r}'_c \approx \zeta\cdot 2^c + \tilde{r}_0 \cdot 2^c$ and $\tilde{r}''_c \approx (1-\zeta)^c 2^c \tilde{r}_0$,
thus the additional spread introduced is $\approx \zeta \cdot 2^c + c \zeta 2^c \tilde{r}_0 = \bigo(\zeta n)$.

\section{Conclusion and Future Work}

We have introduced the comparison problem, and presented a simple dynamics to solve this problem in a self-stabilizing and robust manner. 
A valid alternative view of our results is \emph{algorithm-oriented}, rather than \emph{problem-oriented}: we are studying of a natural comparison dynamics in the protocol model. 

The algorithm we consider is guaranteed to converge to configurations where at most $\bigo( n / \log n)$ of the agents are in the ``wrong'' state. Hence, a single random sample from the stable solution will return the correct output with probability $\geq 1 - \bigo(1 / \log n)$. Notice that this is in some sense the best we could hope for with this type of dynamics, since $\Theta(n / \log n)$ agents will be in the strong state for the minority opinion by the structure of the chain.
Further, our protocol uses $\bigo( \log n \log \log n)$ states, and converges in $\bigo(\log n)$ parallel time, w.h.p. We believe that the state space can be improved to $\bigo( \log n),$ which we leave for future work.

As future work, it would be interesting to consider whether further simplified dynamics exist, which would remove the need for the additional counter state used to decide the output opinions. It would also be interesting to extend this binary comparison problem to multiple baseline states. While our analysis will immediately extend to the case where the count of the most populous baseline state dominates the sum of all the other baseline states, it would be interesting to examine whether finer-grained convergence conditions exist, i.e. the case of plurality. Finally, 
another interesting question for future work regards improved lower bounds for the comparison problem. 

\section*{Acknowledgments}

We would like to thank Rati Gelashvili for very useful discussions, and the PODC anonymous reviewers for their careful reading of our paper, and for their useful remarks.
This work is partially supported by the Polish National Science Center (NCN) grant
UMO-2017/25/B/ST6/02010.

\bibliographystyle{alpha}
\bibliography{references}

\newcommand{\etalchar}[1]{$^{#1}$}
\begin{thebibliography}{BNKKP20}

\bibitem[AAD{\etalchar{+}}06]{AADFP06}
Dana Angluin, James Aspnes, Zo{\"e} Diamadi, Michael~J Fischer, and Ren{\'e}
  Peralta.
\newblock Computation in networks of passively mobile finite-state sensors.
\newblock {\em Distributed computing}, 18(4):235--253, 2006.

\bibitem[AAE08]{AAE08}
Dana Angluin, James Aspnes, and David Eisenstat.
\newblock A simple population protocol for fast robust approximate majority.
\newblock {\em Distributed Computing}, 21(2):87--102, 2008.

\bibitem[AAE{\etalchar{+}}17]{alistarh2017time}
Dan Alistarh, James Aspnes, David Eisenstat, Rati Gelashvili, and Ronald~L
  Rivest.
\newblock Time-space trade-offs in population protocols.
\newblock In {\em {SODA}}, pages 2560--2579, 2017.

\bibitem[AAER07]{AAER07}
Dana Angluin, James Aspnes, David Eisenstat, and Eric Ruppert.
\newblock The computational power of population protocols.
\newblock {\em Distributed Computing}, 20(4):279--304, 2007.

\bibitem[AAFJ08]{AAFJ08}
Dana Angluin, James Aspnes, Michael~J Fischer, and Hong Jiang.
\newblock Self-stabilizing population protocols.
\newblock {\em ACM Transactions on Autonomous and Adaptive Systems},
  3(4):13:1--13:28, 2008.

\bibitem[AAG18]{AAG18}
Dan Alistarh, James Aspnes, and Rati Gelashvili.
\newblock Space-optimal majority in population protocols.
\newblock In {\em {SODA}}, pages 2221--2239, 2018.

\bibitem[ADK{\etalchar{+}}17]{alistarh2017robust}
Dan Alistarh, Bart{\l}omiej Dudek, Adrian Kosowski, David Soloveichik, and
  Przemys{\l}aw Uzna{\'n}ski.
\newblock Robust detection in leak-prone population protocols.
\newblock In {\em International Conference on DNA-Based Computers}, pages
  155--171. Springer, 2017.

\bibitem[AG18]{AG18}
Dan Alistarh and Rati Gelashvili.
\newblock Recent algorithmic advances in population protocols.
\newblock {\em SIGACT News}, 49(3):63--73, October 2018.

\bibitem[BCN20]{becchetti2020consensus}
Luca Becchetti, Andrea Clementi, and Emanuele Natale.
\newblock Consensus dynamics: An overview.
\newblock {\em ACM SIGACT News}, 51(1):58--104, 2020.

\bibitem[BEF{\etalchar{+}}18a]{BRKKR18}
Petra Berenbrink, Robert Els{\"a}sser, Tom Friedetzky, Dominik Kaaser, Peter
  Kling, and Tomasz Radzik.
\newblock Majority \& stabilization in population protocols.
\newblock {\em CoRR}, abs/1805.04586, 2018.

\bibitem[BEF{\etalchar{+}}18b]{BEFKKR18}
Petra Berenbrink, Robert Els{\"{a}}sser, Tom Friedetzky, Dominik Kaaser, Peter
  Kling, and Tomasz Radzik.
\newblock A population protocol for exact majority with ${O}(\log^{5/3} n)$
  stabilization time and ${\Theta}(\log n)$ states.
\newblock In {\em DISC}, pages 10:1--10:18, 2018.

\bibitem[BGK20]{berenbrink2020optimal}
Petra Berenbrink, George Giakkoupis, and Peter Kling.
\newblock Optimal time and space leader election in population protocols.
\newblock In {\em {STOC}}, pages 119--129, 2020.

\bibitem[BLM13]{Bernstein}
St{\'e}phane Boucheron, G{\'a}bor Lugosi, and Pascal Massart.
\newblock {\em Concentration inequalities: A nonasymptotic theory of
  independence}.
\newblock Oxford university press, 2013.

\bibitem[BNKKP20]{podcmajority}
Stav Ben-Nun, Tsvi Kopelowitz, Matan Kraus, and Ely Porat.
\newblock An {$O(\log^{3/2} n)$} parallel time population protocol for majority
  with {$O(\log n)$} states.
\newblock In {\em PODC}, 2020.

\bibitem[CCN12]{CCN12}
Luca Cardelli and Attila Csik\'asz-Nagy.
\newblock The cell cycle switch computes approximate majority.
\newblock {\em Nature Scientific Reports}, 2:656:1--656:9, 2012.

\bibitem[CDS{\etalchar{+}}13]{CD13}
Yuan-Jyue Chen, Neil Dalchau, Niranjan Srnivas, Andrew Phillips, Luca Cardelli,
  David Soloveichik, and Georg Seelig.
\newblock Programmable chemical controllers made from dna.
\newblock {\em Nature Nanotechnology}, 8(10):755--762, 2013.

\bibitem[CMN{\etalchar{+}}11]{CMNPS11}
Ioannis Chatzigiannakis, Othon Michail, Stavros Nikolaou, Andreas Pavlogiannis,
  and Paul~G Spirakis.
\newblock Passively mobile communicating machines that use restricted space.
\newblock In {\em ACM SIGACT/SIGMOBILE International Workshop on Foundations of
  Mobile Computing}, pages 6--15, 2011.

\bibitem[dCN20]{d2020phase}
Francesco d’Amore, Andrea Clementi, and Emanuele Natale.
\newblock Phase transition of a non-linear opinion dynamics with noisy
  interactions.
\newblock In {\em {SIROCCO}}, pages 255--272, 2020.

\bibitem[DEG{\etalchar{+}}21]{DBLP:journals/corr/abs-2106-10201}
David Doty, Mahsa Eftekhari, Leszek Gąsieniec, Eric~E. Severson, Grzegorz
  Stachowiak, and Przemysław Uznański.
\newblock A time and space optimal stable population protocol solving exact
  majority.
\newblock {\em CoRR}, abs/2106.10201, 2021.

\bibitem[DK18]{DBLP:conf/stoc/DudekK18}
Bartłomiej Dudek and Adrian Kosowski.
\newblock Universal protocols for information dissemination using emergent
  signals.
\newblock In {\em {STOC}}, pages 87--99, 2018.

\bibitem[DV12]{DV12}
Moez Draief and Milan Vojnovic.
\newblock Convergence speed of binary interval consensus.
\newblock {\em SIAM Journal on Control and Optimization}, 50(3):1087--1109,
  2012.

\bibitem[ER{\etalchar{+}}18]{elsaesser2018recent}
Robert Els{\"a}sser, Tomasz Radzik, et~al.
\newblock Recent results in population protocols for exact majority and leader
  election.
\newblock {\em Bulletin of EATCS}, 3(126), 2018.

\bibitem[GP16]{GP16}
Mohsen Ghaffari and Merav Parter.
\newblock A polylogarithmic gossip algorithm for plurality consensus.
\newblock In {\em {PODC}}, pages 117--126, 2016.

\bibitem[GS18]{GS18}
Leszek G{\k{a}}sieniec and Grzegorz Stachowiak.
\newblock Fast space optimal leader election in population protocols.
\newblock In {\em {SODA}}, pages 2653--2667, 2018.

\bibitem[KU18]{DBLP:journals/corr/abs-1802-06872}
Adrian Kosowski and Przemysław Uznański.
\newblock Population protocols are fast.
\newblock {\em CoRR}, abs/1802.06872, 2018.

\bibitem[MNRS17]{MNRS17}
George~B Mertzios, Sotiris~E Nikoletseas, Christoforos~L Raptopoulos, and
  Paul~G Spirakis.
\newblock Determining majority in networks with local interactions and very
  small local memory.
\newblock {\em Distributed Computing}, 30(1):1--16, 2017.

\bibitem[PVV09]{PVV09}
Etienne Perron, Dinkar Vasudevan, and Milan Vojnovic.
\newblock Using three states for binary consensus on complete graphs.
\newblock In {\em INFOCOM 2009, IEEE}, pages 2527--2535. IEEE, 2009.

\bibitem[TWS15]{Leaks}
Chris Thachuk, Erik Winfree, and David Soloveichik.
\newblock Leakless dna strand displacement systems.
\newblock In {\em {DNA}}, pages 133--153, 2015.

\end{thebibliography}

\newpage

\appendix

\section{Proof of Theorem~\ref{th:general_martingale}.}

\textsc{Theorem~\ref{th:general_martingale}.}\emph{\maintoolA\maintoolB\maintoolC}

\vspace{1em}
We start with the following lemma
\begin{lemma}
\label{onephaselemma}
Let $T = \Theta(\frac{1}{\lambda} n \log n)$ and additionally assume $B(t) \le m$ for $t \in [T]$, for $m \ge a$. Then $|B(T) - a|  \le \bigo(c_1 \sqrt{m \log n} + \log n) + \varepsilon n$ with high probability, where $c_1 = \sqrt{\frac{\delta+\eta}{\lambda}} $.
\end{lemma}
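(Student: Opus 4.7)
\textbf{Proof plan for Lemma~\ref{onephaselemma}.}

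The plan is to set up a weighted martingale concentration for the centered process and then directly apply a Bernstein-type bound. Let $D(t) = B(t) - a$ and $\rho = 1 - \lambda/n$. Using condition~(1), write $A(t) = a + \psi(t)$ with $|\psi(t)| \le \varepsilon a$. Then condition~(2) rewrites as
\[
\E[D(t+1)\mid A(t),B(t)] = \rho\, D(t) + \frac{\lambda}{n}\,\psi(t),
\]
so if we let $\xi(t+1) = B(t+1) - \E[B(t+1) \mid A(t), B(t)]$ be the martingale difference, we obtain the linear recurrence
\[
D(t+1) = \rho\, D(t) + \frac{\lambda}{n}\,\psi(t) + \xi(t+1).
\]
Unrolling for $t$ from $0$ to $T-1$ yields the clean decomposition
\[
D(T) \;=\; \underbrace{\rho^{T} D(0)}_{\text{initial}} \;+\; \underbrace{\frac{\lambda}{n}\sum_{t=0}^{T-1}\rho^{T-1-t}\psi(t)}_{\text{drift}} \;+\; \underbrace{\sum_{t=0}^{T-1}\rho^{T-1-t}\xi(t+1)}_{\text{noise}\,=:\,Y}.
\]

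The first two terms are bounded deterministically. Since $|D(0)| \le \max(B(0),a) \le n$ and by choosing $T = C\tfrac{n}{\lambda}\log n$ with $C$ large enough, we have $\rho^T \le e^{-\lambda T/n} \le n^{-c}$, so the initial term contributes $o(1)$. For the drift term, bound $|\psi(t)| \le \varepsilon a$ and use the geometric sum $\sum_{t=0}^{T-1}\rho^{T-1-t} \le \tfrac{1}{1-\rho} = \tfrac{n}{\lambda}$; this gives total deterministic drift at most $\tfrac{\lambda}{n}\cdot \varepsilon a \cdot \tfrac{n}{\lambda} = \varepsilon a \le \varepsilon n$.

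The main work is bounding $|Y|$, which I would do with a standard Bernstein inequality for martingales with bounded differences (e.g.~\cite{Bernstein}). Condition~(3) gives $|\xi(t+1)| \le 2\gamma = O(1)$, so each summand satisfies $|\rho^{T-1-t}\xi(t+1)| \le 2\gamma$. For the predictable quadratic variation, condition~(4) combined with $B(t) \le m$ and $A(t) \le (1+\varepsilon)a \le 2m$ gives $\Var[\xi(t+1) \mid t] \le (\delta + 2\eta)\,m/n$, and summing with the geometric weights $\rho^{2(T-1-t)}$ produces
\[
\sum_{t=0}^{T-1} \rho^{2(T-1-t)} \Var[\xi(t+1)\mid t] \;\le\; \frac{1}{1-\rho^2}\cdot \frac{(\delta+2\eta)m}{n} \;\le\; \frac{(\delta+\eta)\,m}{\lambda} \cdot O(1) \;=\; O(c_1^2\, m).
\]
Bernstein then yields $|Y| \le O\!\bigl(c_1\sqrt{m \log n} + \gamma \log n\bigr) = O\!\bigl(c_1\sqrt{m\log n} + \log n\bigr)$ with high probability. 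Combining the three pieces delivers the claim.

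The most delicate step, and the only real obstacle, is the Bernstein application: one must verify that the predictable variance bound in condition~(4) can be used despite the fact that $\Var[\xi(t+1)\mid t]$ depends on the random $B(t)$ and $A(t)$. The bound $B(t) \le m$ is part of the hypothesis, and $A(t) \le (1+\varepsilon)a$ from condition~(1) is deterministic, so replacing these quantities by their upper bounds preserves the martingale Bernstein hypothesis pointwise along every sample path. This is what makes the argument go through without a detailed stopping-time analysis and yields the stated concentration.
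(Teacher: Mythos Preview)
Your proposal is correct and follows essentially the same route as the paper: both define the martingale increments $\xi(t+1)=B(t+1)-\rho B(t)-\tfrac{\lambda}{n}A(t)$ (the paper calls them $\Delta(t+1)$), unroll the linear recurrence to express $B(T)$ as a geometrically weighted sum of these increments plus a deterministic drift and a vanishing initial term, bound the predictable quadratic variation by $O(c_1^2 m)$ using $B(t)\le m$ and $A(t)\le(1+\varepsilon)a$, and finish with Bernstein for martingales. The only cosmetic difference is that you center at $a$ before unrolling whereas the paper tracks $B(T)$ and bounds $|a-A|$ separately afterward; also note your increment bound should be $|\xi(t+1)|\le\gamma+\lambda$ rather than $2\gamma$, though this is immaterial since both are $O(1)$.
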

\begin{proof}

Denote 
$$\Delta(t+1) = B(t+1) - (1-\frac{\lambda}{n}) B(t) - \frac{\lambda}{n} A(t).$$
 There is
$$\E[ \Delta(t+1)\ |\ A(t),B(t) ] =0,$$
$$\Var[ \Delta(t+1)\ |\ A(t),B(t) ] \le \delta \cdot \frac{B(t)}{n} + \eta \cdot \frac{A(t)}{n}$$
and
$$|\Delta(t+1)| \le  (1-\frac{\lambda}{n}) |B(t+1)-B(t)| + \frac{\lambda}{n} |B(t+1)-A(t)| \le \gamma + \lambda.$$

Denote $$A =  \sum_{j=0}^{T-1} \frac{\lambda}{n} (1-\frac{\lambda}{n})^{j} A(T-j-1).$$

We proceed to bound the random variable
\begin{align*}
\Phi &= \sum_{j=0}^{T-1} (1-\frac{\lambda}{n})^j \cdot  \Delta(T-j)\\
&= \sum_{j=0}^{T-1} (1-\frac{\lambda}{n})^j \cdot B(T-j) - \sum_{j=0}^{T-1} (1-\frac{\lambda}{n})^{j+1} \cdot B(T-j-1)  - \sum_{j=0}^{T-1} \frac{\lambda}{n} (1-\frac{\lambda}{n})^{j} A(T-j-1)\\
 &= B(T) - (1-\frac{\lambda}{n})^T B(0) - A
\end{align*}

We first observe:
$$
\E[\Phi] = \sum_{j=0}^{T-1} \E[ \Delta(T-j) \cdot (1-\frac{\lambda}{n})^j\ |\ A(T-j-1), B(T-j-1) ] = 0
$$
And we bound the sum of conditional variances:
\begin{align*}
K &= \sum_{j=0}^{T-1} \Var[ \Delta(T-j) \cdot (1-\frac{\lambda}{n})^j\ |\ A(T-j-1),B(T-j-1) ]\\
&\le \sum_{j=0}^{T-1} \left(\delta \cdot \frac{B(T-j)}{n} + \eta \cdot \frac{A(T-j)}{n}\right)\cdot (1-\frac{\lambda}{n})^{2j}\\
&\le \left(\delta \cdot \frac{m}{n} + \eta \frac{(1+\varepsilon) a}{n} \right) \cdot \frac{n}{\lambda}\\
&\le \frac{\delta + 2\eta}{\lambda} \cdot m
\end{align*}

We also state the absolute variables bound:
$$ M = \max_{j=0,..,T-1} | \Delta(T-j) \cdot (1-\frac{\lambda}{n})^j| \le \gamma + \lambda.$$

By Bernstein's inequality for martingales~\cite{Bernstein}
$$\Pr[|\Phi| \ge t] \le 2 \exp\left(\frac{-t^2/2}{K + M t/3}\right) \le \exp( \frac{-t^2}{2K} ) + \exp( \frac{-3t}{2M} )$$

It is thus enough to set $t = \Theta(\sqrt{K \log n} + M \log n) = \Theta(c_1\sqrt{m \log n} + \log n)$ for the bound to have $|\Phi| \le t$ with high probability.

We then observe that (by using appropriate bound on $A(T-j)$ and sums of geometric progressions)
$$A  \le (1+\varepsilon )\cdot a$$
$$A \ge (1- (1-\frac{\lambda}{n})^T)(1-\varepsilon) \cdot a \ge (1-n^{-\Theta(1)})(1-\varepsilon)  a$$
thus
$$| a - A | \le \varepsilon a + n^{-\Theta(1)} $$

Thus following holds
\begin{align*}
|B(T) - a| &\le |\Phi| + |a - A| + (1-\frac{\lambda}{n})^T B(0)\\
&= \bigo(c_1 \sqrt{m \log n} + \log n) + \varepsilon a.
\end{align*}
 \end{proof}
 
We now iterate $\Theta(\log \log n)$ times Lemma \ref{onephaselemma}  to bootstrap the concentration. 
We proceed in phases, where each phase is of length $T$ required for Lemma~\ref{onephaselemma} to work, and phase $k$ spans $T_k = [T\cdot (k-1)+1,T \cdot k]$. Let $m_k = \max_{t \in T_k} B(t)$. Initially we trivially have $m_0 \le n$.

We observe that Lemma~\ref{onephaselemma} applied to phase $k$ reduces upperbound of $m_k$ to $m_{k+1} \le (1 + \varepsilon )a + \bigo (c_1 \sqrt{m_k \log n} + \log n) \le C \max( c_1 \sqrt{m_k \log n}, a)$ for some  constant $C$.

By easy inductive argument it follows that
$$m_{k} \le \max\left( \left((c_1C)^2 \log n\right)^{1-2^{-k}} \cdot (m_0)^{2^{-k}}, C a\right).$$
For some $\ell = \Theta(\log \log n)$ there is (since $a = \Omega(\log n)$)
$$m_{\ell} = \bigo(C a + (c_1C)^2 \log n) =  \bigo( a + c_1 \log n).$$
which gives us that for time $t' \ge T' = T \cdot \ell$, by Lemma~\ref{onephaselemma}
$$|B(t') - a | \le \varepsilon a + \bigo(c_1 \sqrt{a \log n} + \log n).$$
\qed

\end{document}